\newtheorem{theorem}{Theorem}
\newtheorem*{theorem*}{Theorem}
\newtheorem{corollary}{Corollary}
\newtheorem*{corollary*}{Corollary}
\newtheorem*{proposition*}{Proposition}
\begin{document}

%

%

\twocolumn[

\aistatstitle{Control Functionals for Quasi-Monte Carlo Integration}

\aistatsauthor{ Chris. J. Oates \And Mark Girolami }

\aistatsaddress{ University of Technology Sydney and \And University of Warwick and \\ The Alan Turing Institute for Data Science  } ]

\fontdimen14\textfont2=3pt

\begin{abstract} 
Quasi-Monte Carlo (QMC) methods are being adopted in statistical applications due to the increasingly challenging nature of numerical integrals that are now routinely encountered.
For integrands with $d$-dimensions and derivatives of order $\alpha$, an optimal QMC rule converges at a best-possible rate $O(N^{-\alpha/d})$.
However, in applications the value of $\alpha$ can be unknown and/or a rate-optimal QMC rule can be unavailable.
Standard practice is to employ $\alpha_L$-optimal QMC where the lower bound $\alpha_L \leq \alpha$ is known, but in general this does not exploit the full power of QMC.
One solution is to trade-off numerical integration with functional approximation.
This strategy is explored herein and shown to be well-suited to modern statistical computation.
A challenging application to robotic arm data demonstrates a substantial variance reduction in predictions for mechanical torques.
\end{abstract} 

\section{Introduction} \label{intro}

Consider a Lebesgue-integrable test function $f : \mathcal{X} \rightarrow \mathbb{R}$ defined on a bounded measurable subspace $\mathcal{X} \subseteq \mathbb{R}^d$ ($d \in \mathbb{N}$) with square integrable derivatives of order $\alpha>0$ in each variable.
Our focus is numerical computation of the integral $I[f] :=  \int_{\mathcal{X}} f(\bm{x}) d\bm{x}$.
The Quasi-Monte Carlo (QMC) approach is based on an approximation 
$$Q[f;\bm{x}^{1:N}] := \frac{1}{N} \sum_{n=1}^N f(\bm{x}^n)$$
where the (possibly random) design points $\bm{x}^{1:N} = \{\bm{x}^1,\dots,\bm{x}^N\} \subset \mathcal{X}$ have low discrepancy; that is, the points are `well-spaced' in a precise sense defined below.
This contrasts with the Monte Carlo (MC) approach whereby the design points are sampled independently from a uniform distribution over $\mathcal{X}$.
MC integration achieves a root mean square error (RMSE) convergence rate of $O(N^{-1/2})$ whereas QMC integration can in principle achieve a rate $O(N^{-\alpha/d})$ on specific geometric sequences $\{N_n\}_{n=1}^\infty$ \cite{Owen3}.
It is known that this rate is best-possible \cite{Novak} and explicit algorithms to generate design points that attain this rate are now available for many (but not all) values of $\alpha$ \cite{Dick6}.
Challenging integration problems are common in contemporary statistics, for example when computing expectations, marginal probability densities or normalising constants, and QMC methods are therefore gaining importance in statistical applications \cite{Gerber,Lacoste,Yang}.

Contrary to the above theoretical considerations, rate-optimal QMC is often not employed in practice.
This is mainly due to three reasons; either (R1) the smoothness parameter $\alpha$ is unknown, (R2) there does not currently exist an explicit QMC rule that is rate-optimal for functions of smoothness $\alpha$, or (R3) it is simply more convenient to employ a basic QMC rule based on a weaker smoothness assumption $\alpha_L < \alpha$, as implemented in standard software.
In each situation there is a gap between theory and practice that, as we show in this paper, can be bridged using functional approximation.

Previous work on variance reduction techniques for QMC includes \cite{Aistleitner}, who considered modified importance sampling strategies, and \cite{Hickernell}, who considered constructing control variates for QMC.
Neither approach improved the asymptotic error rate, though in some cases the QMC error was reduced by a constant factor.
Interestingly, \cite{Hickernell} reports some quite negative results for control variate strategies in this setting, because the objective being minimised by QMC is not equivalent to the MC variance that is minimised by control variates.
\cite{Wang2} demonstrates variance reductions in QMC are possible using additive approximations, though again the asymptotics were unchanged.

This paper studies a general approach to variance reduction for QMC rules, building on kernel methods and recent work in the Monte Carlo setting due to \cite{Oates,Tracey}.
The mathematics that underpins our work comes from the functional approximation literature.
This takes the form of a `control functional' $\psi : \mathcal{X} \rightarrow \mathbb{R}$ that satisfies (i) $\psi$ integrates to zero, (ii) $f - \psi$ is more amenable to QMC methods than $f$, in a precise sense.
The general approach that we explore is to replace the integrand $f$ by $f - \psi$ and target the QMC objective directly.
This can lead to accelerated asymptotics.
The main contribution of this paper is to explore this strategy in the settings (R1-3) above.
Theoretical analysis of convergence rates is provided, along with empirical results and a challenging application to robotics.
We begin by presenting some background on QMC theory below, before describing the methodology in more detail.

\section{Background}

QMC is naturally studied in reproducing kernel Hilbert spaces (RKHS; \cite{Dick5}). 
Below we draw connections with kernel methods, that are themselves naturally studied in RKHS.

{\bf Notation.} 
We work in a Hilbert space $H$, consisting of measurable functions $f : \mathcal{X} \rightarrow \mathbb{R}$.
For simplicity of presentation we assume $H$ includes the constant functions.
We follow the mainstream QMC literature by taking $\mathcal{X} = [0,1]^d$, equipped with the Euclidean norm $\|\bm{x}\| := (\sum_{i=1}^d x_i^2)^{1/2}$.
Denote the scalar product and norm on $H$ by $\langle\cdot,\cdot\rangle_{H}$ and $\|\cdot\|_{H}$ respectively.
Suppose further that $H$ is a RKHS with kernel $K : [0,1]^d \times [0,1]^d \rightarrow \mathbb{R}$; that is, $K$ satisfies (i) $K(\cdot,\bm{x}) \in H$ for all $\bm{x} \in [0,1]^d$ and (ii) $f(\bm{x}) = \langle f,K(\cdot,\bm{x})\rangle_{H}$ for all $f \in H$ and all $\bm{x} \in [0,1]^d$.
$K$ is assumed to be non-trivial, i.e. $K \neq 0$.

{\bf Quadrature Error Analysis.}
The quadrature methods that we focus on aim to minimise the `worst case' integration error which, for design points $\bm{x}^{1:N}$ and Hilbert space $H$, is defined to be
\begin{eqnarray}
e_H(\bm{x}^{1:N}) := \sup_{\|f\|_H \leq 1} \left| Q[f;\bm{x}^{1:N}] - I[f] \right|
\label{wce}
\end{eqnarray}
where the supremum is taken over all test functions $f$ belonging to the unit ball in $H$.
It follows from linearity that, for any function $f \in H$, the integration error obeys 
\begin{eqnarray}
\left| Q[f;\bm{x}^{1:N}] - I[f] \right| \leq e_H(\bm{x}^{1:N}) \|f\|_H. \label{KH}
\end{eqnarray}
The worst case error $e_H(\bm{x}^{1:N})$ is the usual target of QMC innovation, with $\bm{x}^{1:N}$ chosen to (approximately, asymptotically) minimise $e_H(\bm{x}^{1:N})$ \cite{Dick5}.
Note that Eqn. \ref{wce} is also the `maximum mean discrepancy' (MMD), as studied extensively in the kernel methods literature \cite{Briol,Smola}.

Quadrature is naturally studied in RKHS because there exists a closed-form expression for the worst case error in terms of the kernel $K$, which facilitates the principled selection of design points \cite{Dick5}:
\begin{eqnarray}
e_H(\bm{x}^{1:N})^2 & = & \int\int_{[0,1]^d} K(\bm{x},\bm{y}) d\bm{x} d\bm{y} \nonumber \\
& & - \frac{2}{N} \sum_{n=1}^N \int_{[0,1]^d} K(\bm{x}^n,\bm{y}) d\bm{y} \nonumber \\
& & + \frac{1}{N^2} \sum_{m,n=1}^N K(\bm{x}^n,\bm{x}^m)   \label{discrep}
\end{eqnarray}
The mainstream QMC literature supposes $H$ is a Sobolev space of known order $\alpha$ (defined below).
In this setting, $O(N^{-\alpha/d})$ is the best-possible rate for the worst case error when $\bm{x}^{1:N}$ are chosen deterministically and $O(N^{-\alpha/d- 1/2})$ is the best-possible RMSE when $\bm{x}^{1:N}$ are allowed to be random \cite{Novak}.
We will refer to QMC rules that achieve these optimal rates as `$\alpha$-QMC rules'.

This paper focuses on improving performance in the situation where a (sub-optimal) $\alpha_L$-QMC rule is used to integrate a test function of smoothness $\alpha > \alpha_L$.
For reasons (R1-3), this scenario is commonly encountered in statistical applications.
In contrast to QMC \cite{Dick5} (and kernel methods that aim to minimise the MMD \cite{Bach}), the rate constant $\|f\|_H$ is the primary target of our methodology below.

\section{Methodology}

{\bf Control Functionals for QMC.}
The approach that we pursue in this paper aims to construct a Lebesgue-integrable functional $\psi : [0,1]^d \rightarrow \mathbb{R}$ that satisfies 
\begin{eqnarray}
I[\psi] = 0. \label{zero}
\end{eqnarray}
When $\bm{x}$ has the interpretation of a random variable, $\psi(\bm{x})$ is classically known as a `control variate' \cite{Hickernell}.
When $\psi$ itself is estimated, we follow \cite{Oates} and refer to the entire mapping $\psi$ as a `control functional' (CF).
In the CF approach to estimation, the test function $f$ is replaced by $f - \psi$; it is hoped that the latter is more amenable to numerical integration.
Clearly $I[f - \psi] = I[f]$.
In this paper we construct a CF $\psi_N$ based on a tractable approximation $f_N$ to $f$.
(The dependence on $N$ will be explained below.)
It is required that the integral $I[f_N]$ is available in closed-form.
We then set
\begin{eqnarray}
\psi_N(\bm{x}) = f_N(\bm{x}) - I[f_N] \label{subtract}
\end{eqnarray}
so that $\psi_N$ satisfies Eqn. \ref{zero}.
For this to make sense mathematically, it must be the case that $f_N \in H$ and this informs our method of approximation (the constant function with value $I[f_N]$ belongs to $H$ by assumption).
Intuitively, a good CF $\psi_N$ will provide a close approximation to fluctuations of the test function $f$, so that the functional difference $f - \psi_N$ become increasingly `flat' and thus more amenable to QMC methods.
More precisely, motivated by Eqn. \ref{KH} we aim to construct a CF such that $\|f - \psi_N\|_H < \|f\|_H$.
This connection with functional approximation offers the possibility to leverage kernel methods for these problems, see e.g. \cite{Schaback,Tracey}.

{\bf Control Functional Error Analysis.}
Consider partitioning $\bm{x}^{1:N}$ into two sets $\bm{u}^{1:M}$ and $\bm{v}^{M+1:N}$ where $1 < M < N$ and $M/N \rightarrow c \in (0,1)$ as $N \rightarrow \infty$.
The first set $\bm{u}^{1:M}$, possibly non-random, will be used in a preliminary step to construct an approximation $f_M(\cdot;\bm{u}^{1:M})$ to $f$.
Then the second set $\bm{v}^{M+1:N}$, possibly random, is used to evaluate the `CF estimator'
\begin{eqnarray}
E[f;\bm{u}^{1:M},\bm{v}^{M+1:N}] & := & Q[f-\psi_N(\cdot;\bm{u}^{1:M});\bm{v}^{M+1:N}] \nonumber \\ 
& = &  Q[f-f_M(\cdot;\bm{u}^{1:M});\bm{v}^{M+1:N}] \nonumber \\
& & \; \; \; \; \; \; \; \;  + I[f_M(\cdot;\bm{u}^{1:M})]. 
\end{eqnarray}
We remark that if the points $\bm{v}^n$ are random and marginally distributed as $U([0,1]^d)$ then $E[f;\bm{u}^{1:M},\bm{v}^{M+1:N}]$ will be an unbiased estimator for $I[f]$.
Error analysis for the CF estimator is based on the following:
\begin{theorem} \label{theorem one}
Given $f,f_M \in H$, we have
\begin{align}
& | E[f;\bm{u}^{1:M},\bm{v}^{M+1:N}] - I[f] | \nonumber \\
& \; \; \; \; \; \; \; \; \; \; \; \; \; \; \; \leq  e_H(\bm{v}^{M+1:N}) \|f-f_M(\cdot;\bm{u}^{1:M})\|_H. \label{QMCCF}
\end{align}
\end{theorem}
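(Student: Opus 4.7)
The plan is to reduce the statement to the known worst case error bound \eqref{KH} applied to the modified integrand $f - f_M$. First I would unpack the definition of the CF estimator using the second equality,
\[
E[f;\bm{u}^{1:M},\bm{v}^{M+1:N}] \;=\; Q[f-f_M(\cdot;\bm{u}^{1:M});\bm{v}^{M+1:N}] + I[f_M(\cdot;\bm{u}^{1:M})],
\]
and subtract $I[f]$ from both sides. Grouping the two integrals on the right together, the error becomes
\[
E[f;\bm{u}^{1:M},\bm{v}^{M+1:N}] - I[f] \;=\; Q[f-f_M(\cdot;\bm{u}^{1:M});\bm{v}^{M+1:N}] - I[f-f_M(\cdot;\bm{u}^{1:M})],
\]
so the error of the CF estimator for $f$ coincides exactly with the quadrature error of the plain QMC rule based on $\bm{v}^{M+1:N}$ applied to the integrand $g := f - f_M(\cdot;\bm{u}^{1:M})$.

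The next step is to note that $g$ lies in $H$: by hypothesis $f, f_M \in H$, and $H$ is a linear space, so $g = f - f_M(\cdot;\bm{u}^{1:M}) \in H$. This is exactly what is needed to invoke the worst case error inequality \eqref{KH} with this $g$ and with design points $\bm{v}^{M+1:N}$, yielding
\[
\bigl|Q[g;\bm{v}^{M+1:N}] - I[g]\bigr| \;\leq\; e_H(\bm{v}^{M+1:N})\,\|g\|_H,
\]
which is precisely \eqref{QMCCF} after substituting $g = f - f_M(\cdot;\bm{u}^{1:M})$.

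I do not anticipate a substantive obstacle: the argument is essentially a rewriting that separates the role of the first set of points (used only to construct the approximation $f_M$) from the role of the second set (to which the standard QMC bound applies). The only subtle point worth stating explicitly is that the identity $I[\psi_N] = 0$ built into the construction \eqref{subtract} is what allows the $I[f_M]$ term to recombine with $-I[f]$ into $-I[f - f_M]$ without introducing any additive constant into the $H$-norm on the right-hand side; in particular, there is no $I[f_M]$ residue inside the norm, even though $\psi_N$ itself differs from $f_M$ by that constant.
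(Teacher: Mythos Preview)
Your proposal is correct and follows exactly the paper's own argument: show $f - f_M \in H$ by linearity, then apply the worst-case error bound \eqref{KH} to this difference, using linearity of $I$ to identify the CF error with $Q[f-f_M;\bm{v}^{M+1:N}] - I[f-f_M]$. The extra remark about the $I[f_M]$ term recombining is a helpful clarification but not a departure from the paper's proof.
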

\begin{proof}
Since $f,f_M \in H$ we have that $f-f_M \in H$. 
The result then follows by applying the fundamental inequality from Eqn. \ref{KH} to the function $f-f_M$ and using linearity of the integral operator $I$.
\end{proof}
Thus the CF methodology produces an estimator $E[f;\bm{u}^{1:M},\bm{v}^{M+1:N}]$ that has asymptotically zero error relative to standard QMC estimators, providing that it is possible to construct an approximation $f_M$ to $f$ in such a way that $\|f - f_M(\cdot;\bm{u}^{1:M})\|_H \rightarrow 0$ as $M \rightarrow \infty$.
The next sections establish convergence rates for functional approximation using kernel methods.

{\bf Sobolev Spaces.}
To achieve consistent approximation $\|f-f_M\|_H \rightarrow 0$ it is necessary to impose regularity conditions on $H$. 
Sobolev spaces are a general setting in which to formulate such regularity assumptions; our main reference here is \cite{Schaback}.
Firstly suppose that $k \in \mathbb{N}_0$, $k > d/2$ and $1 \leq p < \infty$. 
For a multi-index $\bm{a} \in \mathbb{N}_0^d$ we write $|\bm{a}| = a_1+\dots+a_d$.
Define the `$p$-Sobolev space of order $k$' to be
\begin{eqnarray*}
W^{k,p} & := & \{f : [0,1]^d \rightarrow \mathbb{R} \; | \;  D^{\bm{a}}f \text{ exists and } \\
& & \; \; D^{\bm{a}}f \in L_p([0,1]^d), \forall \bm{a} \in \mathbb{N}_0^d \text{ with } |\bm{a}| \leq k\}.
\end{eqnarray*}
Here $D^{\bm{a}} f$ denotes the weak (or `distributional') derivative of $f$; the reader is referred to the above reference for details.
Clearly $W^{k,p}$ is a vector space over $\mathbb{R}$ when addition and (scalar) multiplication are defined point-wise.
For the special case $p=2$ we equip $W^{k,2}$ with the inner product
\begin{eqnarray*}
\langle f , g \rangle_k := \sum_{\bm{a} \in \mathbb{N}_0^d, |\bm{a}| \leq k} I[D^{\bm{a}} f D^{\bm{a}} g]
\end{eqnarray*}
and denote this inner-product space $H^{k} := (W^{k,2},\langle\cdot,\cdot\rangle_k)$.
Defined in this way, $H^k$ is a Hilbert space of functions whose (weak) derivatives exist up to order $k$.
Moreover $H^k$ can be made into a RKHS with an appropriate choice of kernel (see below).
Our results below apply also to Sobolev spaces with non-integer $k$, but this construction is more technical and we refer the reader to \cite{Schaback} for details.

{\bf Approximation in Sobolev Spaces.}
Our assumptions are naturally stated using Sobolev spaces:
Given two Hilbert spaces $H$, $H'$, defined on the same element set, with norms $\|\cdot\|_H$, $\|\cdot\|_{H'}$, we say that $H$ and $H'$ are `norm-equivalent', written  $H \equiv H'$, whenever there exist positive constants $c_1$, $c_2$ such that $c_1\|f\|_H \leq \|f\|_{H'} \leq c_2\|f\|_H$ for all $f \in H$.

\noindent {\it Assumption 1}: $H \equiv H^{\alpha_L}$ where $\alpha_L > d/2$.

\noindent {\it Assumption 2}: $f \in H^\alpha$ where $\alpha \geq \alpha_L$.

Assumption 1 is a technical requirement to ensure the space $H$ (where QMC is performed) admits consistent functional approximation.
Assumption 2 ensures that the test function $f$ is `smooth enough' for $\alpha_L$-QMC methods to converge at the $\alpha_L$-rate.
This follows from the fact that Sobolev spaces are nested, so that $f \in H^\alpha \implies f \in H^{\alpha_L}$.

For consistent approximation of $f$ it is necessary to base our approximation $f_M$ in a space $H_*$ of functions that are `at least as smooth' as $f$: 

\noindent {\it Assumption 3}: $H_* \equiv H^{\alpha_U}$ where $\alpha_U \geq \alpha$.

It follows again from the nested property that $f_M \in H^{\alpha_L}$ and thus the functional difference $f - f_M$ exists in $H^{\alpha_L}$.
The Sobolev spaces $H_*$ can be characterised as RKHS via an appropriate reproducing kernel $K_*$, such as the well-known Mat\'{e}rn kernel.

Finally an approximation $f_M$ to $f$ is constructed based on the points $\bm{u}^{1:M}$ as follows:
\begin{eqnarray}
f_M(\bm{x};\bm{u}^{1:M}) := \sum_{n=1}^M \beta_n K_*(\bm{x},\bm{u}^n) \label{reg 1}
\end{eqnarray}
where the weights $\beta_n \in \mathbb{R}$ are defined as the solution to the linear system of interpolation equations
\begin{eqnarray}
f_M(\bm{u}^n;\bm{u}^{1:M}) = f(\bm{u}^n), \; \; \; \; \; n = 1,\dots,M. \label{interp}
\end{eqnarray}
It is well-known that Eqn. \ref{reg 1} is the unique minimiser of the $H_*$-norm under all functions in $H_*$ that satisfy the linear system in Eqn. \ref{interp} \cite{Schaback}.
In practice it may be necessary to regularise the linear system in order to facilitate inversion, but we do not go into details here, see e.g. \cite{Schaback}.

We note that $I[f_M]$ will {\it not} have a closed-form expression when the Mat\'{e}rn kernel is employed and for this technical reason we instead employ tensor products of polynomial kernels (these  give rise to Sobolev spaces of mixed dominating smoothness - full details are provided at the end of this section).

{\bf Theory: Deterministic Case.}
We begin by considering the case where the design points $\bm{v}^{M+1:N}$ are chosen deterministically.
Define the `fill distance' 
$$h(\bm{u}^{1:M}) := \sup_{\bm{x} \in [0,1]^d} \min_{n} \|\bm{x} - \bm{u}^n\|,$$ 
the `separation radius' 
$$q(\bm{u}^{1:M}) := \frac{1}{2} \min_{j \neq k}\|\bm{u}^j - \bm{u}^k\|$$ 
and the `mesh ratio' $\rho(\bm{u}^{1:M}) := h(\bm{u}^{1:M}) / q(\bm{u}^{1:M})$.
The set $\bm{u}^{1:M}$ is called `quasi-uniform' if $\rho(\bm{u}^{1:M}) \rightarrow 1$ as $M \rightarrow \infty$.
\begin{theorem} \label{theo1}
Under Assumptions 1-3 the CF estimator has error bounded by
\begin{align*}
& | E[f;\bm{u}^{1:M},\bm{v}^{M+1:N}] - I[f] |  \\
& \; \;   \leq C e_{H^{\alpha_L}}(\bm{v}^{M+1:N}) h(\bm{u}^{1:M})^{\alpha - \alpha_L} \rho(\bm{u}^{1:M})^{\alpha_U-\alpha_L} \|f\|_{H^\alpha}
\end{align*}
where $C>0$ is a constant that depends on $\alpha$, $\alpha_L$ and $\alpha_U$ but not on $f$, $\bm{v}^{M+1:N}$ and $\bm{u}^{1:M}$.
\end{theorem}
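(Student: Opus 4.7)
The starting point is Theorem~\ref{theorem one}, which already provides the bound
\[
|E[f;\bm{u}^{1:M},\bm{v}^{M+1:N}] - I[f]| \leq e_H(\bm{v}^{M+1:N}) \, \|f - f_M(\cdot;\bm{u}^{1:M})\|_H.
\]
Assumption~1 gives $H \equiv H^{\alpha_L}$, so I would pass freely between $H$ and $H^{\alpha_L}$ in both factors above, absorbing the norm-equivalence constants into the eventual $C$. The task therefore reduces to establishing the purely deterministic kernel approximation estimate
\[
\|f - f_M(\cdot;\bm{u}^{1:M})\|_{H^{\alpha_L}} \leq C \, h(\bm{u}^{1:M})^{\alpha - \alpha_L} \, \rho(\bm{u}^{1:M})^{\alpha_U - \alpha_L} \, \|f\|_{H^{\alpha}}.
\]

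The function $f_M$ is the minimum-norm interpolant to $f$ at the nodes $\bm{u}^{1:M}$ in the native space $H_* \equiv H^{\alpha_U}$, while $f$ itself is only assumed to live in the rougher space $H^\alpha$ with $\alpha_L \leq \alpha \leq \alpha_U$. Bounds of exactly this shape---``sampling'' or ``escape-from-the-native-space'' inequalities---are a standard fixture of radial basis function and meshfree approximation theory, originating with Narcowich, Wendland and Ward and developed systematically in \cite{Schaback}. My plan is therefore to invoke such a result directly. The exponent $\alpha - \alpha_L$ on the fill distance $h$ reflects the genuine smoothness of $f$ being used for approximation, with $\alpha_L$ derivatives paid to measure the error in $H^{\alpha_L}$ rather than in $L_2$; the mesh-ratio factor $\rho^{\alpha_U - \alpha_L}$ enters because the natural error bound for the minimum-norm interpolant lives in the native space $H^{\alpha_U}$, and an inverse (Bernstein) estimate must be applied to measure that error in the weaker $H^{\alpha_L}$ norm. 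This inverse estimate is precisely what introduces the separation radius $q$ into the denominator, hence the mesh ratio into the constant.

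Substituting the approximation bound into Theorem~\ref{theorem one} yields the claimed inequality, with $C$ depending only on $\alpha$, $\alpha_L$, $\alpha_U$ and the geometry of $[0,1]^d$. I expect the main obstacle to be pedestrian rather than conceptual: the cited sampling inequality must be matched to the present setting, and in particular one has to verify that the native-space kernel $K_*$ indeed generates a space norm-equivalent to $H^{\alpha_U}$ (Assumption~3), that the interpolation problem defining $f_M$ is solvable once $\bm{u}^{1:M}$ is quasi-uniform and sufficiently dense, and that the constants in the cited estimate depend on none of $f$, $\bm{u}^{1:M}$ or $\bm{v}^{M+1:N}$. Once these verifications are in place, the proof collapses to a one-line composition of Theorem~\ref{theorem one} with the approximation bound.
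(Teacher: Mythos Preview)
Your proposal is correct and matches the paper's own proof essentially line for line: the paper applies Theorem~\ref{theorem one}, then invokes precisely the Schaback--Wendland sampling inequality (Theorem~7.8 of \cite{Schaback}) to obtain the bound $\|f-f_M\|_{H^{\alpha_L}} \leq C\, h^{\alpha-\alpha_L}\rho^{\alpha_U-\alpha_L}\|f\|_{H^\alpha}$, and combines the two. Your additional remarks on the norm-equivalence passage from $H$ to $H^{\alpha_L}$ and the heuristic origin of the exponents are accurate and in fact more explicit than the paper's own argument.
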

\begin{proof}
From \cite{Schaback} (Theorem 7.8) we have that the kernel estimator in Eqn. \ref{reg 1} is consistent for the non-parametric regression problem at a rate
\begin{align*}
& \|f-f_M(\cdot;\bm{u}^{1:M})\|_{H^{\alpha_L}} \\
& \; \; \; \; \; \; \; \; \; \; \leq C h(\bm{u}^{1:M})^{\alpha - \alpha_L} \rho(\bm{u}^{1:M})^{\alpha_U-\alpha_L} \|f\|_{H^\alpha}
\end{align*}
where $C$ depends only on $\alpha, \alpha_L, \alpha_U$.
Combining this with Eqn. \ref{QMCCF} completes the proof.
\end{proof}

For quasi-uniform $\bm{u}^{1:M}$, there is no asymptotic penalty from employing a kernel $K_*$ that imposes `too much smoothness' on the approximation $f_M$, with $\rho \rightarrow 1$.
In this case $h(\bm{u}^{1:M}) = O(M^{-1/d})$ and, since $M$ and $N$ are proportional, $h(\bm{u}^{1:M}) = O(N^{-1/d})$.
However the rate constant $C$ will increase when too much smoothness is assumed so that, as a rule of thumb, we should try to select $\alpha_U$ as close as possible to $\alpha$.
Our main result is stated below:

\begin{corollary}
When $\bm{u}^{1:M}$ is quasi-uniform, CFs accelerate $\alpha_L$-QMC by a factor $O(N^{- (\alpha - \alpha_L)/d})$.
\end{corollary}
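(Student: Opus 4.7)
The plan is to take Theorem 2 as the starting point and bound each of the three $\bm{u}$- and $\bm{v}$-dependent factors separately under the hypotheses of the corollary. The quasi-uniformity assumption $\rho(\bm{u}^{1:M}) \to 1$ immediately disposes of the mesh-ratio term: $\rho(\bm{u}^{1:M})^{\alpha_U - \alpha_L} = O(1)$, so this factor can be absorbed into the overall constant. This is the reason for insisting on quasi-uniform designs: it decouples the convergence rate from the choice of $\alpha_U$.

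Next I would handle the fill-distance factor. A standard fact for quasi-uniform point sets in $[0,1]^d$ is that $h(\bm{u}^{1:M}) = \Theta(M^{-1/d})$; combined with the hypothesis $M/N \to c \in (0,1)$ (so $M \asymp N$), this yields
\begin{equation*}
h(\bm{u}^{1:M})^{\alpha - \alpha_L} = O\bigl(N^{-(\alpha - \alpha_L)/d}\bigr).
\end{equation*}
Finally, since $\bm{v}^{M+1:N}$ is generated by an $\alpha_L$-QMC rule on $N - M = \Theta(N)$ points, by the very definition of an $\alpha_L$-QMC rule the worst-case error in $H^{\alpha_L}$ satisfies
\begin{equation*}
e_{H^{\alpha_L}}(\bm{v}^{M+1:N}) = O\bigl((N-M)^{-\alpha_L/d}\bigr) = O\bigl(N^{-\alpha_L/d}\bigr),
\end{equation*}
with the analogous statement (rate $O(N^{-\alpha_L/d - 1/2})$) in the randomised case.

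Multiplying the three bounds together and inserting them into Theorem 2 gives
\begin{equation*}
|E[f;\bm{u}^{1:M},\bm{v}^{M+1:N}] - I[f]| = O\bigl(N^{-\alpha_L/d}\bigr) \cdot O\bigl(N^{-(\alpha - \alpha_L)/d}\bigr) \cdot \|f\|_{H^\alpha} = O\bigl(N^{-\alpha/d}\bigr),
\end{equation*}
uniformly in $f \in H^\alpha$. Since the baseline $\alpha_L$-QMC estimator (without control functionals, applied to all $N$ points) decays at rate $O(N^{-\alpha_L/d})$, the CF variant improves this rate by exactly the factor $O(N^{-(\alpha - \alpha_L)/d})$, which is the claim.

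This reduces to bookkeeping once Theorem 2 is in hand; the only mildly subtle point, and the one worth articulating carefully in the write-up, is the interpretation of ``accelerates by a factor.'' It relies on the observation that splitting $\bm{x}^{1:N}$ into $\bm{u}^{1:M}$ and $\bm{v}^{M+1:N}$ with $M/N \to c \in (0,1)$ costs nothing asymptotically: the integration-phase sample size $N-M$ remains $\Theta(N)$, so the $\alpha_L$-QMC rate on $\bm{v}^{M+1:N}$ coincides with the $\alpha_L$-QMC rate on $\bm{x}^{1:N}$ up to a constant, and the entire gain $N^{-(\alpha-\alpha_L)/d}$ comes from the approximation factor $\|f - f_M\|_{H^{\alpha_L}}$.
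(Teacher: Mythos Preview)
Your argument is correct and mirrors the paper's own reasoning: the paper notes (in the paragraph immediately preceding the corollary) that quasi-uniformity gives $\rho \to 1$ and $h(\bm{u}^{1:M}) = O(M^{-1/d}) = O(N^{-1/d})$, then states the corollary without further proof. Your write-up simply makes explicit the step of combining this with the $\alpha_L$-QMC rate on $\bm{v}^{M+1:N}$ and interpreting the ratio, which the paper leaves implicit.
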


{\it Remark:} The improvement due to CFs appears to be mainly limited to low-dimensional integrals ($d$ small), but in fact CFs can in principle be extended to high-dimensional integrals under additional tractability assumptions, as discussed in Sec. \ref{discuss}.

{\it Remark:} Optimising the bound in Theorem \ref{theo1} enables us to obtain the optimal scaling
$$
\frac{M}{N} \rightarrow c^* = \frac{\alpha - \alpha_L}{\alpha},
$$
see the Supplement for full details.

The overall convergence rate of the CF estimator depends on how the design points $\bm{v}^{M+1:N}$ are generated. 
For this there are many QMC methodologies available, each leading to different convergence rates for the worst case error $e_{H^{\alpha_L}}(\bm{v}^{M+1:N})$; see \cite{Dick} for a recent survey of some of these approaches.
Of particular interest in statistical applications is the case of random design points which we discuss below.

{\bf Theory: Randomised Case.} 
Modern QMC methods begin with a deterministic set/sequence of design points (e.g. a Halton sequence or a Sobol sequence), then apply a random transformation leading to a low discrepancy set with high probability.
Below we consider three types of randomisation; shifting, folding and scrambling.

{\it Shifting:}
In `random shift' QMC the design points $\bm{v}^{M+1:N}$ are translated by a common uniform random vector $\bm{\Delta} \in [0,1]^d$, so that $\bm{v}^n \mapsto \bm{v}^n + \bm{\Delta}$ for each $n = M+1,\dots,N$. For convenience we write this `shifted' set as $\bm{v}^{M+1:N} + \bm{\Delta}$.
Applying Theorem \ref{theo1} to $\bm{v}^{M+1:N} + \bm{\Delta}$ and then marginalising over $\bm{\Delta} \in [0,1]^d$ produces a RMSE bound for the CF estimator:
\begin{corollary}
Under Assumptions 1-3 the random shift CF estimator has error bounded by
\begin{align*}
& \sqrt{\mathbb{E}|E[f;\bm{u}^{1:M},\bm{v}^{M+1:N}+\bm{\Delta}] - I[f]|^2} \\
& \; \; \leq C e_{H^{\alpha_L}}^{\text{sh}}(\bm{v}^{M+1:N}) h(\bm{u}^{1:M})^{\alpha - \alpha_L} \rho(\bm{u}^{1:M})^{\alpha_U-\alpha_L} \|f\|_{H^\alpha}
\end{align*}
where 
\begin{eqnarray*}
(e_{H^{\alpha_L}}^{\text{sh}}(\bm{v}^{M+1:N}))^2 := \int_{[0,1]^d} e_{H^{\alpha_L}}(\bm{v}^{M+1:N}+\bm{\Delta})^2 d\bm{\Delta}
\end{eqnarray*}
and $C>0$ is a constant that does not depend on $f$, $\bm{v}^{M+1:N}$ or $\bm{u}^{1:M}$.
\end{corollary}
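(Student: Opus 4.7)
The plan is to prove the corollary by combining Theorem \ref{theo1}, applied conditionally on the random shift, with a Fubini/integration step over $\bm{\Delta}$. The key observation is that the randomisation only enters through the evaluation points $\bm{v}^{M+1:N}+\bm{\Delta}$; the approximation $f_M(\cdot;\bm{u}^{1:M})$ and all of the Sobolev-norm quantities on the right-hand side of Theorem \ref{theo1} are deterministic functions of $\bm{u}^{1:M}$ and $f$ alone.

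First, I would fix $\bm{\Delta} \in [0,1]^d$ arbitrarily and apply Theorem \ref{theo1} with the design set $\bm{v}^{M+1:N}+\bm{\Delta}$ in place of $\bm{v}^{M+1:N}$. Since Assumptions 1--3 and the conditions on $\bm{u}^{1:M}$ are unchanged by the shift, and since $f, f_M \in H$, this yields the pointwise bound
\begin{equation*}
| E[f;\bm{u}^{1:M},\bm{v}^{M+1:N}+\bm{\Delta}] - I[f] | \leq C e_{H^{\alpha_L}}(\bm{v}^{M+1:N}+\bm{\Delta}) h(\bm{u}^{1:M})^{\alpha-\alpha_L} \rho(\bm{u}^{1:M})^{\alpha_U-\alpha_L} \|f\|_{H^\alpha}
\end{equation*}
for every realisation of $\bm{\Delta}$. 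Here I implicitly interpret the shift in the standard QMC sense (componentwise modulo~$1$, i.e.\ on the torus), so that the shifted points remain in $[0,1]^d$ and the worst-case error $e_{H^{\alpha_L}}(\cdot)$ on its right-hand side is well defined.

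Next I would square both sides and integrate over $\bm{\Delta}$ with respect to the uniform measure on $[0,1]^d$. Because $C$, $h(\bm{u}^{1:M})$, $\rho(\bm{u}^{1:M})$ and $\|f\|_{H^\alpha}$ do not depend on $\bm{\Delta}$, they factor out of the integral, giving
\begin{equation*}
\mathbb{E}|E[f;\bm{u}^{1:M},\bm{v}^{M+1:N}+\bm{\Delta}]-I[f]|^2 \leq C^2 \bigl(e_{H^{\alpha_L}}^{\text{sh}}(\bm{v}^{M+1:N})\bigr)^2 h(\bm{u}^{1:M})^{2(\alpha-\alpha_L)} \rho(\bm{u}^{1:M})^{2(\alpha_U-\alpha_L)} \|f\|_{H^\alpha}^2
\end{equation*}
by the very definition of $e_{H^{\alpha_L}}^{\text{sh}}$. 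Taking square roots yields the claimed bound.

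The only real subtlety, which I expect to be the main point that requires care rather than the main obstacle, is the torus interpretation of the random shift: one must verify that Theorem \ref{theo1} (which was stated for deterministic points in $[0,1]^d$) is applicable to the wrapped points $\bm{v}^{M+1:N}+\bm{\Delta} \bmod 1$. This is standard in the QMC randomisation literature and follows because the worst-case error formula in Eqn.~\ref{discrep} is invariant under periodic shifts for periodic (or periodised) kernels, but it should be flagged. Measurability of the map $\bm{\Delta} \mapsto e_{H^{\alpha_L}}(\bm{v}^{M+1:N}+\bm{\Delta})$, needed to justify Fubini, is immediate from the closed-form expression in Eqn.~\ref{discrep}.
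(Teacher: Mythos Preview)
Your proposal is correct and matches the paper's own approach exactly: the paper simply states that the corollary follows by ``applying Theorem~\ref{theo1} to $\bm{v}^{M+1:N}+\bm{\Delta}$ and then marginalising over $\bm{\Delta}\in[0,1]^d$,'' which is precisely your square-and-integrate argument. Your additional remarks on the torus interpretation of the shift and the measurability needed for Fubini go beyond what the paper spells out but are appropriate points of care.
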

For quasi-uniform $\bm{u}^{1:M}$, CFs accelerate random shift $\alpha_L$-QMC by a factor $O(N^{- (\alpha - \alpha_L)/d})$ (compare against Sec. 5.2 of \cite{Dick}).

{\it Folding:}
A shifted and `folded' QMC rule takes the form 
$$Q_{\bm{b}}(f;\bm{z}^{1:N}+\bm{\Delta}) := \frac{1}{N} \sum_{n=1}^N f(\bm{b}(\bm{z}^n + \bm{\Delta}))$$ 
where $\bm{b}$ is the `baker's transformation', given by $b_i(\bm{t}) = 1 - |2t_i-1|$.
This transformation reduces error rates; for example, for $f \in SH^2([0,1]^d)$ (defined below), folding and shifting a uniform lattice $\bm{z}^{1:N}$ leads to a RMSE $O(N^{-2+\epsilon})$ that is smaller than the RMSE $O(N^{-1+\epsilon})$ for a shifted lattice (p. 59 of \cite{Dick}).
The CF estimator here is  
\begin{align*}
& E_{\bm{b}}[f;\bm{u}^{1:M},\bm{v}^{M+1:N}+\bm{\Delta}] \\
& \; \; := I[f_M(\cdot;\bm{u}^{1:M})] + Q_{\bm{b}}[f-f_M(\cdot;\bm{u}^{1:M});\bm{v}^{M+1:N}+\bm{\Delta}].
\end{align*}
For convenience we denote the shifted and folded design points by $\bm{b}(\bm{v}^{M+1:N} + \bm{\Delta})$.
Applying Theorem \ref{theo1} to $\bm{b}(\bm{v}^{M+1:N} + \bm{\Delta})$ and then marginalising over $\bm{\Delta} \in [0,1]^d$ produces:
\begin{corollary}
Under Assumptions 1-3 the shifted and folded CF estimator has error bounded by
\begin{align*}
& \sqrt{\mathbb{E}|E_{\bm{b}}[f;\bm{u}^{1:M},\bm{v}^{M+1:N}+\bm{\Delta}] - I[f]|^2}  \\
& \; \;  \leq C e_{H^{\alpha_L}}^{\text{sh},\bm{b}}(\bm{v}^{M+1:N}) h(\bm{u}^{1:M})^{\alpha - \alpha_L} \rho(\bm{u}^{1:M})^{\alpha_U-\alpha_L} \|f\|_{H^\alpha}
\end{align*}
where 
\begin{eqnarray*}
(e_{H^{\alpha_L}}^{\text{sh},\bm{b}}(\bm{v}^{M+1:N}))^2 := \int_{[0,1]^d} e_{H^{\alpha_L}}(\bm{b}(\bm{v}^{M+1:N}+\bm{\Delta}))^2 d\bm{\Delta}
\end{eqnarray*}
and $C>0$ is a constant independent of $f$, $\bm{v}^{M+1:N}$ and $\bm{u}^{1:M}$.
\end{corollary}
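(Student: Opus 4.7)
The plan is to mirror the argument used for the random shift corollary, adapting it to accommodate the baker's transformation $\bm{b}$. Since $\bm{\Delta}$ is applied only to the second set of design points $\bm{v}^{M+1:N}$, I will condition on $\bm{u}^{1:M}$ throughout and regard the preliminary approximation $f_M(\cdot;\bm{u}^{1:M})$ as fixed. For every fixed realisation of $\bm{\Delta} \in [0,1]^d$, the folded and shifted configuration $\bm{b}(\bm{v}^{M+1:N}+\bm{\Delta})$ is a deterministic set of $N-M$ points in $[0,1]^d$, so Theorem \ref{theo1} applies directly with these points in place of $\bm{v}^{M+1:N}$. This yields the pointwise bound
\begin{align*}
|E_{\bm{b}}[f;\bm{u}^{1:M},\bm{v}^{M+1:N}+\bm{\Delta}] - I[f]| \leq C \, e_{H^{\alpha_L}}(\bm{b}(\bm{v}^{M+1:N}+\bm{\Delta})) \, h(\bm{u}^{1:M})^{\alpha - \alpha_L} \rho(\bm{u}^{1:M})^{\alpha_U-\alpha_L} \|f\|_{H^\alpha}
\end{align*}
for every $\bm{\Delta}$, with the constant $C$ inherited from Theorem \ref{theo1}.

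Next I would square both sides and integrate over $\bm{\Delta} \sim U([0,1]^d)$. Because only the factor $e_{H^{\alpha_L}}(\bm{b}(\bm{v}^{M+1:N}+\bm{\Delta}))$ depends on $\bm{\Delta}$, the remaining terms pull out of the integral, giving
\begin{align*}
\mathbb{E}|E_{\bm{b}} - I[f]|^2 \leq C^2 \left(\int_{[0,1]^d} e_{H^{\alpha_L}}(\bm{b}(\bm{v}^{M+1:N}+\bm{\Delta}))^2 d\bm{\Delta}\right) h(\bm{u}^{1:M})^{2(\alpha-\alpha_L)} \rho(\bm{u}^{1:M})^{2(\alpha_U-\alpha_L)} \|f\|_{H^\alpha}^2.
\end{align*}
Taking square roots and recognising the bracketed integral as $(e_{H^{\alpha_L}}^{\text{sh},\bm{b}}(\bm{v}^{M+1:N}))^2$ produces exactly the claimed bound.

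The only subtlety is verifying that the hypotheses of Theorem \ref{theo1} are really met for every $\bm{\Delta}$: I need $f, f_M \in H$ (which follows from Assumptions 2 and 3 together with the nesting of Sobolev spaces, so that $f - f_M \in H^{\alpha_L}$) and I need $\bm{b}(\bm{v}^{M+1:N}+\bm{\Delta}) \subset [0,1]^d$, which is automatic since the baker's map $b_i(t) = 1-|2t_i - 1|$ maps $\mathbb{R}$ into $[0,1]$ after reduction modulo $1$. I expect the main conceptual point, rather than a technical obstacle, to be the observation that the kernel-based approximation step and the QMC step cleanly decouple: the approximation error depends only on $\bm{u}^{1:M}$, so all of the randomness in $\bm{\Delta}$ is absorbed into the shifted-folded worst case error. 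No measure-preservation property of $\bm{b}$ is actually needed for this corollary — it only enters when one wishes to show that $e_{H^{\alpha_L}}^{\text{sh},\bm{b}}$ decays faster than $e_{H^{\alpha_L}}^{\text{sh}}$ for specific point sets, which is a separate issue outside the scope of the stated bound.
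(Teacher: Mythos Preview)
Your proposal is correct and follows exactly the route sketched in the paper: apply Theorem~\ref{theo1} pointwise to the deterministic configuration $\bm{b}(\bm{v}^{M+1:N}+\bm{\Delta})$, square, integrate over $\bm{\Delta}$, and take the square root. Your additional remarks about checking the hypotheses and about the decoupling of the approximation and QMC steps are accurate elaborations, but the core argument is identical to the paper's.
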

Again, for quasi-uniform $\bm{u}^{1:M}$, CFs accelerate shifted and folded $\alpha_L$-QMC by a factor $O(N^{- (\alpha - \alpha_L)/d})$ (compare against Sec. 5.9 of \cite{Dick}).

{\it Scrambling:}
An explicit $\alpha$-QMC rule that applies for all integer values of $\alpha$ was recently discovered by \cite{Dick6}.
For simplicity focussing on $d=1$, these random design points achieve $\alpha$-rates and, moreover, the RMSE is controlled by a norm of the form $\|f\|_{H^{\alpha}}$.
When $\alpha$ is known and is an integer, one may achieve optimal rates and CFs provide no rate improvement.
However, when $\alpha \notin \mathbb{N}$, CFs can be used to transform these sub-optimal integrators into optimal integrators.

{\bf Choice of Kernel:}
The QMC+CF methodology has some flexibility in terms of the choice of kernel $K_*$ that is used to construct the approximation $f_M$.
Our main requirements here are:
(i) $K_*$ imposes `enough smoothness' on $f_M$ in order to be able to faithfully approximate $f$ (Assumption 3).
Moreover, $K_*$ should be tunable to achieve a pre-specified minimum level of smoothness.
Below we make an explicit connection between $K_*$ and the order of the associated `native' Sobolev space that will allow us to satisfy this requirement.
(ii) The functions $K_*(\cdot,\bm{y})$ can be integrated analytically, so that $I[f_M]$ is available in closed form.
This second requirement leads us to consider tensor products of Sobolev spaces, as described below.

To construct analytically integrable functional approximations we consider kernels that are given by polynomials.
Wendland's compactly supported functions \cite{Wendland} are defined via the recursion 
$$\varphi_{d,k} = \mathcal{I}^k [\varphi_{\lfloor d/2 \rfloor+k+1}],$$ 
the base function $\varphi_\ell(r) = (1-r)_+^\ell$ with $x_+ := \max\{0,x\}$, and the integral operator 
$$\mathcal{I} [\varphi](r) = \int_r^\infty t \varphi(t)dt$$
($r \geq 0$), so that
\begin{eqnarray*}
\varphi_{d,k}(r) = \left\{\begin{array}{ll} (1-r)^{\ell+k} p_{d,k}(r), & r \in [0,1] \\ 0, & r > 1 \end{array} \right.
\end{eqnarray*}
where $\ell = \lfloor d/2 \rfloor + k + 1$ and $p_{d,k}$ is a polynomial of degree $k$ (see e.g. p.87 of \cite{Fasshauer} for explicit formulae).
Then the kernel $K_*(\bm{x},\bm{y}) = \varphi_{d,k}(\|\bm{x} - \bm{y}\|)$ has native space $H^{d/2+k+1/2}$ (where the restriction $d > 3$ is in principle required for the special case $k = 0$) (see e.g. p.109 of \cite{Fasshauer}).
With this kernel we can therefore guarantee a minimum level of smoothness.
By rescaling, the kernel's support can be changed from the unit ball (as above) to balls of smaller radius. 
This in turn enforces sparsity on the system of interpolation equations that are the basis of the CF estimator and reduces the computational cost of inverting this linear system.

Wendland's kernel cannot be integrated analytically in $d \geq 2$ dimensions, violating requirement (ii).
However we can exploit recent work by \cite{Sickel} that shows the $d$-dimensional tensor product space $H^k([0,1]) \otimes \dots \otimes H^k([0,1])$ 
is norm-equivalent to $SH^k = SH^k([0,1]^d)$, the Sobolev space with dominating mixed smoothness: 
\begin{eqnarray*}
SH^k & := & \{f : [0,1]^d \rightarrow \mathbb{R} \; | \; D^{\bm{a}}f \text{ exists and } \\
& & \; \; D^{\bm{a}}f \in L_p([0,1]^d), \forall \bm{a} \in \mathbb{N}_0^d \text{ with } a_i \leq k\}.
\end{eqnarray*}
(The distinction with $H^k([0,1]^d)$ is that the multi-index $\bm{a}$ is now constrained component-wise, $a_i \leq k$, rather than $|\bm{a}| \leq k$.)
In particular $SH^k([0,1]^d) \subseteq H^k([0,1]^d)$ so that functions in $SH^k$ are at least as smooth as functions in $H^k$.
We therefore propose to employ the product kernel
\begin{eqnarray}
K_*^{(k)}(\bm{x},\bm{y}) = \prod_{i=1}^d \varphi_{1,k}(|x_i-y_i|) \label{new kernel}
\end{eqnarray}
whose native space is $SH^{k + 1}$.
The integral 
$$
\int_{[0,1]^d} K_*^{(k)}(\bm{x},\bm{y}) d\bm{x}
$$ 
of tensor products of Wendland functions in Eqn. \ref{new kernel} can now be integrated analytically.
This approach provides a convenient mechanism to control the degree of smoothness that we impose on the approximation $f_M$.

\begin{figure*}[t!]
\centering
\includegraphics[width = 0.32\textwidth]{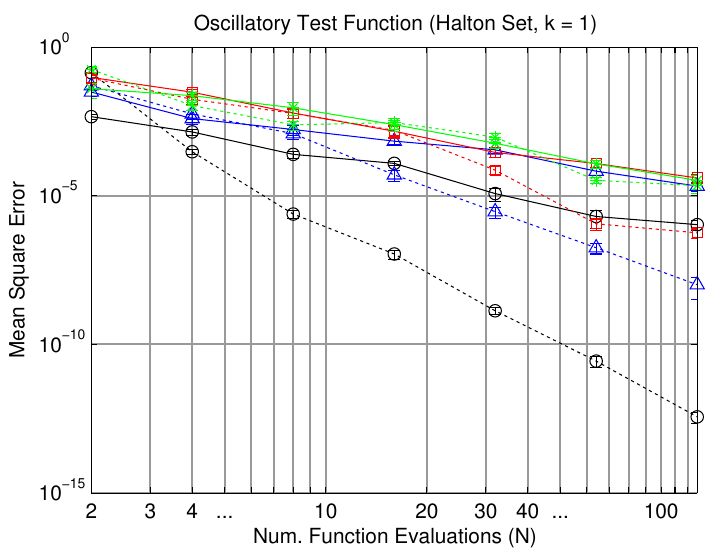}
\includegraphics[width = 0.32\textwidth]{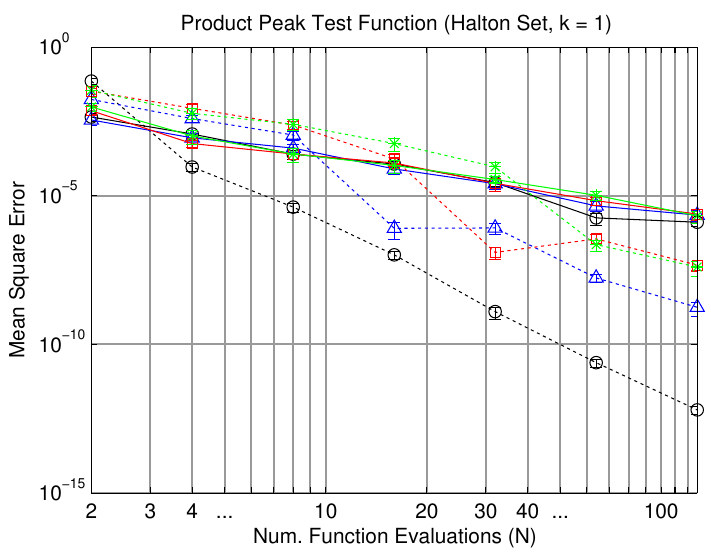}
\includegraphics[width = 0.32\textwidth]{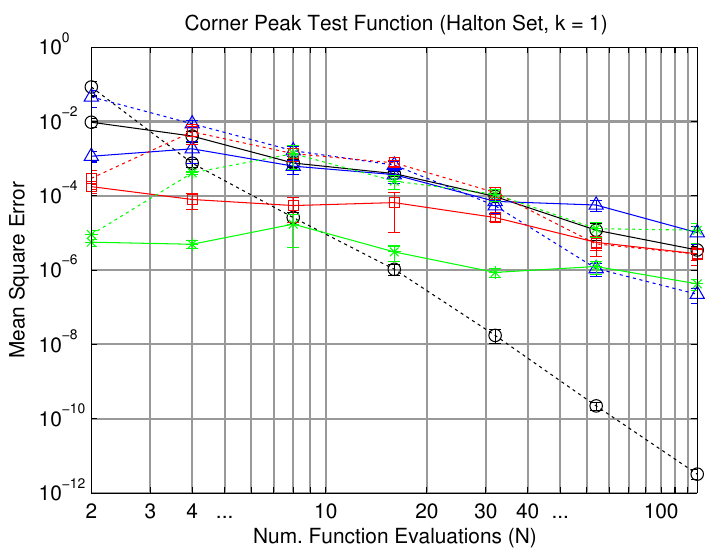}
\includegraphics[width = 0.32\textwidth]{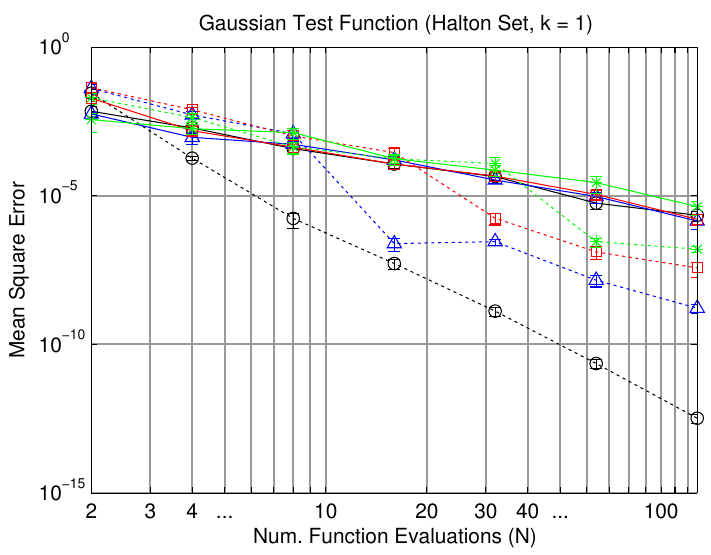}
\includegraphics[width = 0.32\textwidth]{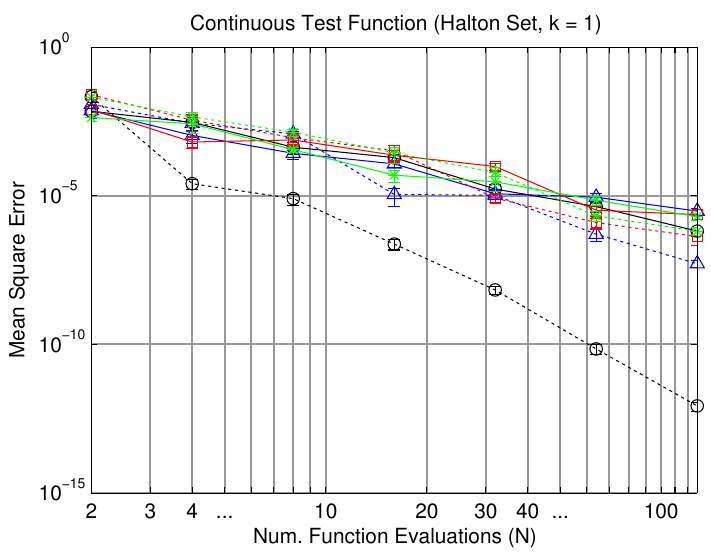}
\includegraphics[width = 0.32\textwidth]{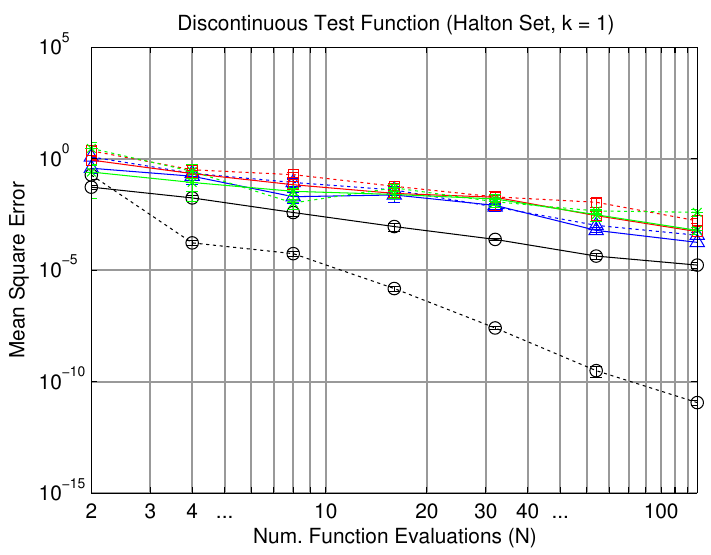}
\caption{Simulation study (Genz functions): 
Each panel represents one test function. 
Solid lines correspond to standard QMC, dashed lines correspond to QMC+CF.
$\ocircle$ represents dimension $d=1$, \textcolor{blue}{$\triangle$} represents $d=2$, \textcolor{red}{$\square$} represents $d = 3$ and \textcolor{green}{$*$} represents $d = 4$.
Experiments were replicated with 10 random seeds and error bars denote standard error of the replicate mean.
QMC points were generated from a shifted and scrambled Halton sequence.
A Wendland regression kernel was used with $k=1$.
}
\label{results fig}
\end{figure*}

\section{Experimental Results} \label{sims}

Our methodology provides a variance reduction technique for QMC that is able to accelerate convergence rates, yet is also practical.
The first numerical study below is a `proof-of-principle' designed to validate this specific claim in the empirical setting.

{\bf Simulation Study:}
For objective assessment we exploited the test package proposed by \cite{Genz}. 
This package defines 6 function families, each of them characterized by some peculiarity, such as oscillation, discontinuity or corner peaks, with the property that their exact integrals are available.
The `discontinuous' Genz function provides an example where smoothness assumptions on the test function are violated.
We used the MATLAB implementation of \cite{Genz} that is freely available at \url{http://people.sc.fsu.edu/~jburkardt/m_src/testpack/testpack.html}.

In the experiments below, we focus on the two QMC rules that are most widely used in practice.
In the first experiment, the random QMC point set $\bm{v}^{M+1:N}$ was generated by truncating the Halton sequence, scrambling the digits of the resulting points using the reverse-radix algorithm \cite{Kocis} and applying a uniform random shift. 
This QMC rule achieves the $\alpha_L = 1$ rate on the subsequence $N_n = 2^n$ when the test function has mixed partial derivatives of first order.
To ensure that these QMC rules were implemented faithfully, we restricted attention to the case where $M = N/2$ so that $N-M$ was always a power of two.
The training points $\bm{u}^{1:M}$ were taken to be $d$-dimensional square lattices in all experiments.

We considered the 6 Genz functions in $d = 1,2,3,4$ dimensions.
The performance of QMC with and without CFs was compared, in each case ensuring that the total number of evaluations of the integrand $f$ was equal for all methods.
For CFs, the tensor-product Wendland kernel with $k=1$ was employed (i.e. approximation with functions $f_M \in H^2$, so $\alpha_U = 2$).
Results are presented in Fig. \ref{results fig}.
(For clarity we chose not present results for MC, since these were inferior to QMC methods in all cases considered.)
For the first 5 Genz functions it holds that $f \in H^\alpha$ with $\alpha = 2$ and theory (for the random case) guarantees an acceleration of $O(N^{-1/d})$; this is borne out in experimental results.
In the 6th, discontinuous case the QMC+CF method does not out-perform QMC (at least in dimension $d>1$), as the functional approximation $f_M$ is poor due to violation of our continuity assumption.
In all cases the performance of QMC+CF approaches that of QMC as the dimension $d$ increased.
In higher dimensions ($d \geq 5$, not shown) the QMC+CF and QMC estimators demonstrated effectively identical performance, in line with theory.

The experiments were then repeated with rougher ($k=0$) and smoother ($k = 2$) regression kernels.
Results in the Supplement (Figs. S3-8) demonstrated a slight improvement in the performance of QMC+CF when $k=2$, in line with theory, though generally estimates were robust to the choice of regression kernel.
To further assess the generality of these conclusions, further experiments were performed using a different QMC rule (truncated Sobol sequence with scrambling due to \cite{Matousek}). 
Results in the Supplement showed that the same conclusions can be drawn in each case.
Taken together, these results demonstrate that CFs can accelerate QMC, at least in low-dimensional settings, and thus complete our `proof-of-principle'.
MATLAB code to reproduce these results is provided.

{\bf Application to Robot Arm Data:}
To demonstrate the benefits of our methodology we consider the problem of estimating the inverse dynamics of a seven degrees-of-freedom robot arm. 
The task, as described in \cite{Rasmussen}, is to map from a 21-dimensional input space (7 positions, 7 velocities, 7 accelerations) to the corresponding 7 joint torques. 
Following \cite{Rasmussen} we present results below on just one of the mappings, from the 21 input variables to the first of the seven torques.
The dataset consists of 48,933 input-output pairs, of which $44,484$ were used as a training set and the remaining 4,449 were used as a test set. The inputs were linearly rescaled to have mean zero and unit variance on the training set. The outputs were centred to have mean zero on the training set.

We consider a hierarchical model based on 21-dimensional Gaussian process (GP) regression.
Denote by $Y_i \in \mathbb{R}$ a measured response variable at state $\bm{z}_i \in \mathbb{R}^{21}$, assumed to satisfy $Y_i = g(\bm{z}_i) + \epsilon_i$ where $\epsilon_i \sim N(0,\sigma^2)$ are independent for $i = 1,\dots,n$ and $\sigma > 0$ will be assumed known.
In order to use training data $(y_i,\bm{z}_i)_{i=1}^n$ to make predictions regarding an unseen test point $\bm{z}_*$, we place a GP prior $g \sim \mathcal{GP}(0,c(\bm{z},\bm{z}';\bm{\theta}))$ where $c(\bm{z},\bm{z}';\bm{\theta}) = \theta_1 \exp (- \frac{1}{2} \theta_2^{-2} \|\bm{z} - \bm{z}'\|_2^2 )$.
Here $\bm{\theta} = (\theta_1,\theta_2)$ are hyper-parameters that control how training samples are used to predict the response at a new test point.
A fully-Bayesian treatment aims to marginalise over these hyper-parameters and we assign independent priors $\theta_1 \sim \Gamma(\alpha,\beta)$, $\theta_2 \sim \Gamma(\gamma,\delta)$ in the shape/scale parametrisation, which we write jointly as $\pi(\bm{\theta})$.
Here  $\sigma = 0.1$, $\alpha = \beta = \gamma = \delta = 2$.

To predict the value of the response $Y_*$ corresponding to an unseen state vector $\bm{z}_*$, our estimator will be the Bayesian posterior mean 
\begin{eqnarray}
\hat{Y}_* := \mathbb{E}[Y_*|\bm{y}] = \int \mathbb{E}[Y_*|\bm{y},\bm{\theta}] \pi(\bm{\theta}) d\bm{\theta}, \label{robot target}
\end{eqnarray}
where we implicitly condition on the covariates $\bm{z}_1,\dots,\bm{z}_n,\bm{z}_*$.
Phrasing in terms of our earlier notation, the test function is 
\begin{eqnarray*}
f(\bm{x}) = \mathbb{E}[Y_*|\bm{y},\Pi^{-1}(\bm{x})] = \bm{C}_{*,n} (\bm{C}_n + \sigma^2 \bm{I}_{n \times n})^{-1} \bm{y}
\end{eqnarray*}
where $\Pi$ is the c.d.f for $\pi$, $(\bm{C}_n)_{i,j} = c(\bm{z}_i,\bm{z}_j;\bm{\theta})$ and $(\bm{C}_{*,n})_{1,j} = c(\bm{z}_*,\bm{z}_j;\bm{\theta})$.
Each evaluation of the integrand $f(\bm{x})$ requires $O(n^3)$ operations due to the matrix inversion and this entails a prohibitive level of computation.
A partial solution is provided by a `subset of regressors' approximation
\begin{eqnarray}
f(\bm{x}) \approx \bm{C}_{*,n'} (\bm{C}_{n',n} \bm{C}_{n,n'} + \sigma^2 \bm{C}_{n'})^{-1} \bm{C}_{n',n} \bm{y} \label{sor approx}
\end{eqnarray}
where $n' < n$ denotes a subset of the full data; see Sec. 8.3.1 of \cite{Rasmussen} for full details.
However even Eqn. \ref{sor approx} still represents a substantial computational burden in general.
To facilitate the illustration below, which investigates the sampling distribution of estimators, we took a random subset of $n=1,000$ training points and a subset of regressors approximation with $n' = 100$.
The total computational time needed to obtain these results was 268 core-hours.

\begin{figure*}[t!]
\centering
\includegraphics[width = 0.3\textwidth,clip,trim = 0cm 1cm 0cm 0cm]{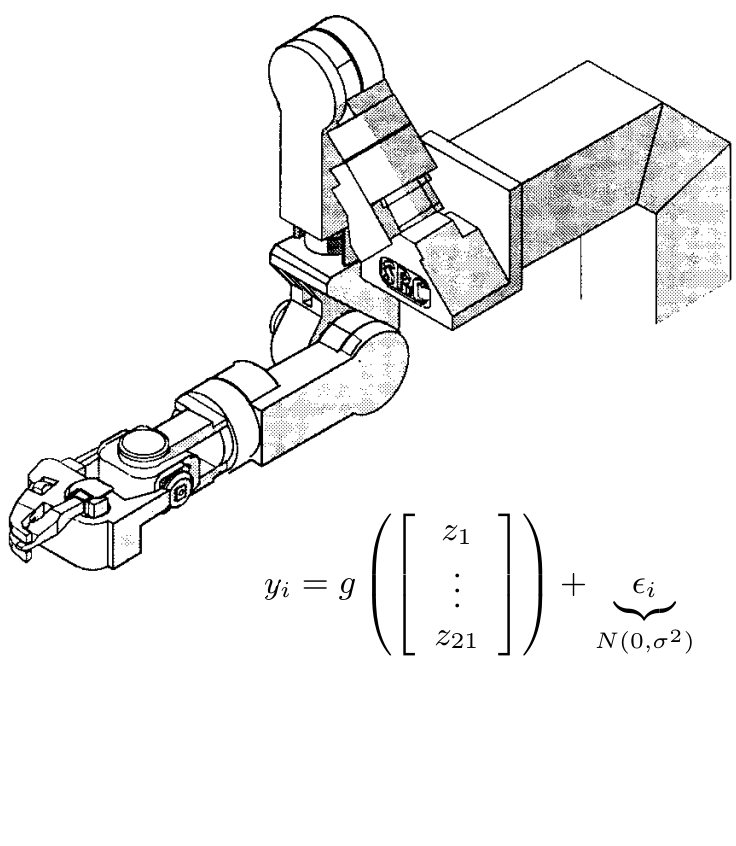}
\includegraphics[width = 0.6\textwidth,clip,trim = 6cm 0.2cm 0cm 0cm]{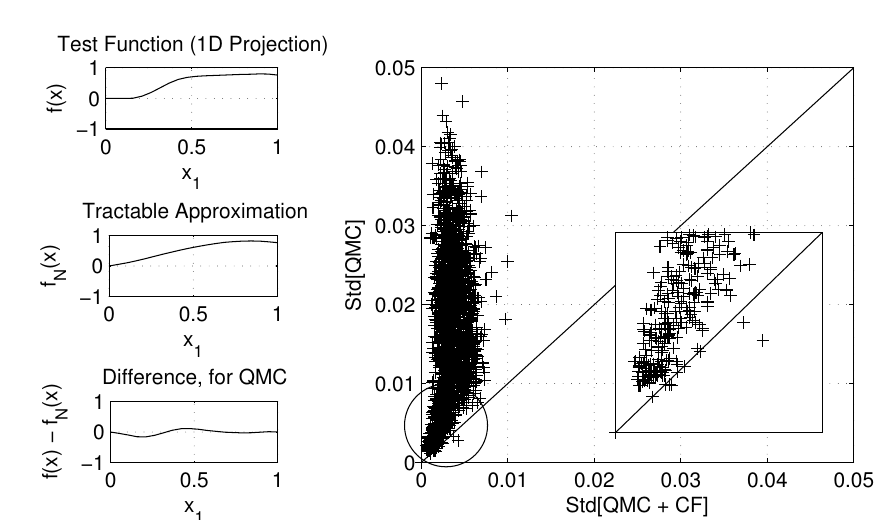}
\caption{Application to robot arm data.
{\it Left}: Posterior predictive means were computed for the mechanical torque experienced by one of the seven joints of the arm, for each of 4,449 joint configurations.
Schematic reproduced from \cite{Vijayakumar2000}.
{\it Centre}: Model hyper-parameters were integrated out; for this task we compared standard QMC with the proposed QMC+CF approach (both implementations provided unbiased estimators).
{\it Right}: Examining the estimator sampling standard deviations, we see that, for all but a handful of the configurations, QMC+CF was more accurate than QMC.
}
\label{robot fig}
\end{figure*}

For each test point $\bm{z}_*$ the sampling standard deviation of $\hat{Y}_*$ was estimated from 10 independent realisations of the QMC procedures.
For CF we used a randomly-shifted, scrambled Halton sequence ($\alpha_L = 1$) and Wendland kernels with $k=1$ ($\alpha_U = 2$), so that theory predicts an acceleration factor of $O(N^{-1/2})$.
The estimator standard deviations were estimated for all 4,449 test points (with $N = 2^8$) and the full results are shown in Fig. \ref{robot fig}.
Note that each test point $\bm{z}_*$ corresponds to a different test function $f$ and thus these results are quite objective, encompassing thousands of different integration problems.
For the vast majority of integration problems, CF accelerated the standard QMC estimator.
Here the computational time to construct a functional approximation (inverting a $16 \times 16$ matrix) was negligible (3\%) in comparison to the cost of evaluating the function $f$ once.
The total additional computational time associated with the QMC+CF methodology was 2\% greater than for QMC, which is easily justified by the substantial variance reductions ($\sim 10^3\%$) that are realised in this application.
Supplementary results (Fig. S9) compare QMC+CF to MC+CF (standard MC sampling).

\section{Discussion} \label{discuss}

QMC methods are becoming increasingly relevant in modern statistics applications \cite{Gerber,Yang} and it is surely a priority to target the rate constants governing the practical performance of these algorithms.
CFs provide one route to achieve this goal, providing substantial variance reductions in many of the examples we considered.
Indeed, CFs allow us to use a sub-optimal QMC rule (e.g. as built into existing software packages) and yet, with minimal additional coding, obtain a QMC+CF algorithm that attains optimal convergence rates.
The focus on unknown smoothness $\alpha$ distinguishes our work from previous literature on the connection between integration and functional approximation, e.g. \cite{Bakhvalov,Heinrich}.

Functional approximation, and hence our QMC+CF methodology, has a computational cost associated with solution of a linear system. 
Whilst negligible in our experiments, this cost could be reduced if necessary using standard approximations and/or compactly supported kernels.
On the other hand, we note that QMC is often used when $f$ is expensive to evaluate and in such situations it is likely that evaluation of the integrand, rather than solution of a linear system, will be the main computational bottleneck.

Our focus was on Sobolev spaces, but it is known that a faster rate $O(N^{-\alpha + \epsilon})$ is possible in the subspace $SH^{\alpha}([0,1]^d)$, for any $\epsilon > 0$, and explicit point sets are available (for integer $\alpha$) \cite{Dick6}.
An immediate extension is to establish optimal rates for CFs in this class of functions.
In a related direction, one can in principle obtain {\it dimension-independent} rates by imposing a (strong) assumption of polynomial tractability on the RKHS. 
This is achieved by generalising to weighted Sobolev spaces, such that the integrand $f$ `depends only weakly on most of the components of $\bm{x}$'.
Further details are provided in \cite{Dick,Novak2} and form part of our ongoing research.

The methods that we describe are immediately applicable in a range of applications including marginalisation of hyper-parameters in classification \cite{Filippone}, probabilistic inference for differential equations \cite{Schober,Cockayne}, computation of model evidence \cite{Oates4} and approximation of the partition function in social network models \cite{Robins}.
Finally we note that CFs generalise to other integration methods including Bayesian Quadrature \cite{OHagan,Briol} and related kernel-based quadrature rules \cite{Bach}, in which the worst case error is also controlled by an RKHS norm $\|f\|_H$; this will be the focus of our ongoing research.

\subsubsection*{Acknowledgments}

The authors are grateful to Dan Simpson, Mathieu Gerber and Ben Collyer for helpful discussions.
CJO was supported by EPSRC [EP/D002060/1] and the ARC Centre of Excellence for Mathematics and Statistical Frontiers. 
MG was supported by EPSRC [EP/J016934/1, EP/K034154/1], an EPSRC Established Career Fellowship, the EU grant [EU/259348] and a Royal Society Wolfson Research Merit Award.

\subsubsection*{References}

\renewcommand{\section}[2]{}

\newpage\onecolumn

\subsection*{Supplementary Text}

In this section we provide details for how to allocate computational resources between the sets $\bm{u}^{1:M}$ and $\bm{v}^{M+1:N}$, trading off integration error with functional approximation error.

\begin{proposition*}
The optimal scaling of $M/N \in (0,1)$, in the sense of asymptotically minimising the QMC+CF absolute error, is given by
$$
\frac{M}{N} \rightarrow c^* = \frac{\alpha - \alpha_L}{\alpha}.
$$
\end{proposition*}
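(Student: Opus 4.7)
The plan is to substitute the known asymptotic rates into the deterministic error bound of Theorem \ref{theo1} (the randomised cases behave identically up to constants) and then optimise the resulting expression over $c = M/N \in (0,1)$. Writing $N - M = N(1-c) + O(1)$, the asymptotics I would rely on are $h(\bm{u}^{1:M}) = O(M^{-1/d})$ for quasi-uniform $\bm{u}^{1:M}$, $\rho(\bm{u}^{1:M}) \to 1$, and $e_{H^{\alpha_L}}(\bm{v}^{M+1:N}) = O((N-M)^{-\alpha_L/d})$ for an $\alpha_L$-QMC rule applied to the second design. Substituting these rates into the bound of Theorem \ref{theo1} and factoring out the overall $N$-dependence yields, up to a constant depending on $f,\alpha,\alpha_L,\alpha_U$,
\begin{equation*}
|E - I[f]| \;=\; O\!\left( N^{-\alpha/d} \cdot c^{-(\alpha-\alpha_L)/d} (1-c)^{-\alpha_L/d} \right).
\end{equation*}

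Next I would minimise the $c$-dependent factor $g(c) := c^{-(\alpha-\alpha_L)/d}(1-c)^{-\alpha_L/d}$ over $c \in (0,1)$. Since $d>0$ and both exponents are non-negative (Assumption 2 gives $\alpha \geq \alpha_L$, and Assumption 1 gives $\alpha_L > d/2 > 0$), taking $-d \log g(c) = (\alpha-\alpha_L)\log c + \alpha_L \log(1-c)$ gives a concave function on $(0,1)$ whose unique stationary point solves
\begin{equation*}
\frac{\alpha-\alpha_L}{c} \;=\; \frac{\alpha_L}{1-c},
\end{equation*}
and rearranging yields $c^* = (\alpha-\alpha_L)/\alpha$. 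The second-derivative check (or concavity of $-d\log g$) confirms this is the maximiser of $-\log g$ and hence the minimiser of the asymptotic error constant.

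One minor wrinkle is the degenerate case $\alpha = \alpha_L$, where the exponent of $h$ vanishes, the $c$-dependence of the error bound collapses to $(1-c)^{-\alpha_L/d}$, and the optimum correctly degenerates to $c^* = 0$ (no CF budget is useful because $\|f-f_M\|_{H^{\alpha_L}}$ need not vanish); the formula $c^* = (\alpha-\alpha_L)/\alpha$ captures this limit. I do not expect any real obstacle here — the argument is a direct one-variable optimisation applied to the error bound already established in Theorem \ref{theo1}; the only point requiring care is to make explicit that both $M \to \infty$ and $N-M \to \infty$ along the chosen scaling, which is automatic for any fixed $c \in (0,1)$, so the asymptotic rates used for $h$ and $e_{H^{\alpha_L}}$ remain valid.
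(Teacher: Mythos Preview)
Your proposal is correct and follows essentially the same approach as the paper: substitute the quasi-uniform fill-distance rate $h(\bm{u}^{1:M}) = O(M^{-1/d})$ and the $\alpha_L$-QMC rate $e_{H^{\alpha_L}}(\bm{v}^{M+1:N}) = O((N-M)^{-\alpha_L/d})$ into the bound of Theorem~\ref{theo1}, write $M = cN$, and optimise the resulting one-variable expression over $c \in (0,1)$. Your version is in fact slightly more complete than the paper's, since you verify via concavity of $-d\log g$ that the stationary point is indeed a minimiser and you handle the degenerate boundary case $\alpha = \alpha_L$.
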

\begin{proof}
From Theorem \ref{theo1}, with $\bm{u}^{1:M}$ quasi-uniform, the QMC+CF error is bounded above by
\begin{eqnarray}
| E[f;\bm{u}^{1:M},\bm{v}^{M+1:N}] - I[f] |  \leq C_f e_{H^{\alpha_L}}(\bm{v}^{M+1:N}) h(\bm{u}^{1:M})^{\alpha - \alpha_L} \label{supp bound}
\end{eqnarray}
for some constant $C_f \in (0,\infty)$.

For $\alpha_L$-optimal QMC we have that $e_{H^{\alpha_L}}(\bm{v}^{M+1:N}) = O((N-M)^{-\alpha_L/d})$.

Suppose that $M = m^d$ for some $m \in \mathbb{N}$. Then since $\bm{u}^{1:M}$ are quasi-uniform it follows (from considering a regular square lattice) that $h(\bm{u}^{1:M}) = O(d^{1/2}m^{-1})$. This gives the general scaling $h(\bm{u}^{1:M}) = O(d^{-1/2}M^{-1/d})$.

Writing $M = c N$ for some $c$ we obtain from Eqn. \ref{supp bound} the objective function
$$
J(c) = (N-(cN))^{-\alpha_L/d} \times (d^{-1/2}(cN)^{-1/d})^{\alpha - \alpha_L}
$$
that we wish to minimise over $c \in [0,1)$.
Solving for $J'(c) = 0$ completes the argument.
\end{proof}

\subsection*{Supplementary Experimental Results}

This section contains all simulated data results discussed in the paper.
Specifically, for each of the 6 test functions described by \cite{Genz}, we display results based on Wendland's compactly supported regression kernel \cite{Wendland} with smoothness parameter $k$ (described above) set equal to either
\begin{itemize}
\item $k=0$, or
\item $k = 1$, or
\item $k = 2$
\end{itemize}
in combination with QMC design points $\bm{v}^{M+1:N}$ generated from either
\begin{itemize}
\item a Halton sequence, deterministically scrambled using the reverse radix algorithm \cite{Kocis}, and then applying a random shift or
\item a Sobol sequence, randomly scrambled using the algorithm of \cite{Matousek}.
\end{itemize}

We used the MATLAB implementation of \cite{Genz} that is freely available (web address given in the Main Text).
The QMC design points can be generated using the in-build MATLAB functions \verb+haltonset+, \verb+sobolset+ and \verb+scramble+.
Full MATLAB code used to generate these results is provided in the Electronic Supplement.

\newpage

\makeatletter 
\renewcommand{\thefigure}{S\@arabic\c@figure}
\makeatother

\newpage
\subsection*{Genz Function \#1: Oscillatory Test Function}

\begin{figure}[h!]
\centering
\begin{subfigure}[b]{0.42\textwidth}
\includegraphics[width = \textwidth]{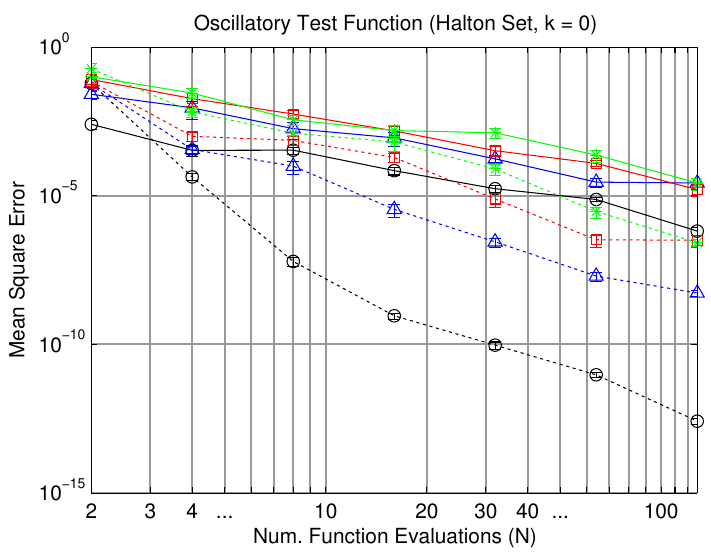}
\end{subfigure}
\begin{subfigure}[b]{0.42\textwidth}
\includegraphics[width = \textwidth]{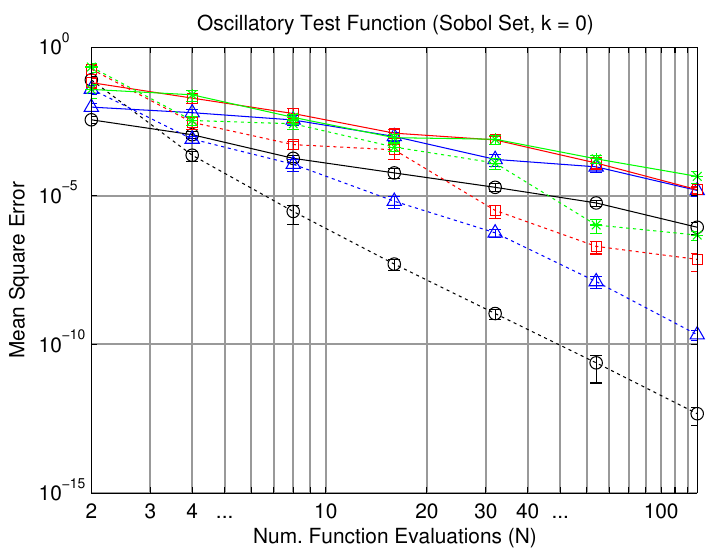}
\end{subfigure}

\begin{subfigure}[b]{0.42\textwidth}
\includegraphics[width = \textwidth]{Figures/func1_k1_h.pdf}
\end{subfigure}
\begin{subfigure}[b]{0.42\textwidth}
\includegraphics[width = \textwidth]{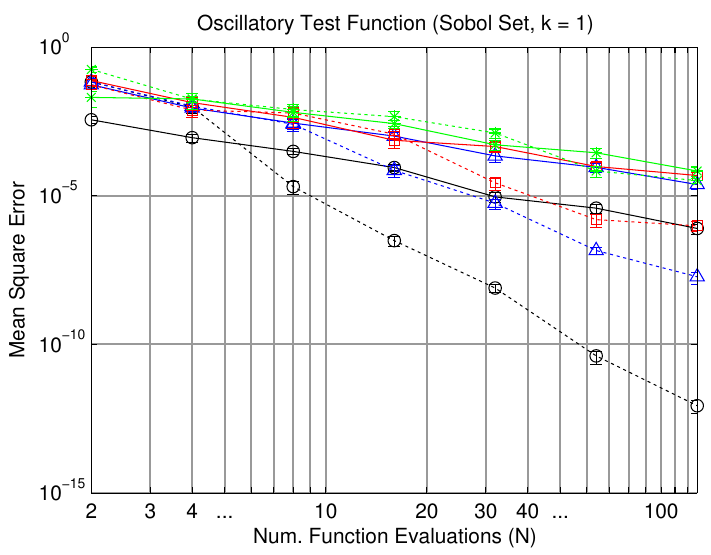}
\end{subfigure}

\begin{subfigure}[b]{0.42\textwidth}
\includegraphics[width = \textwidth]{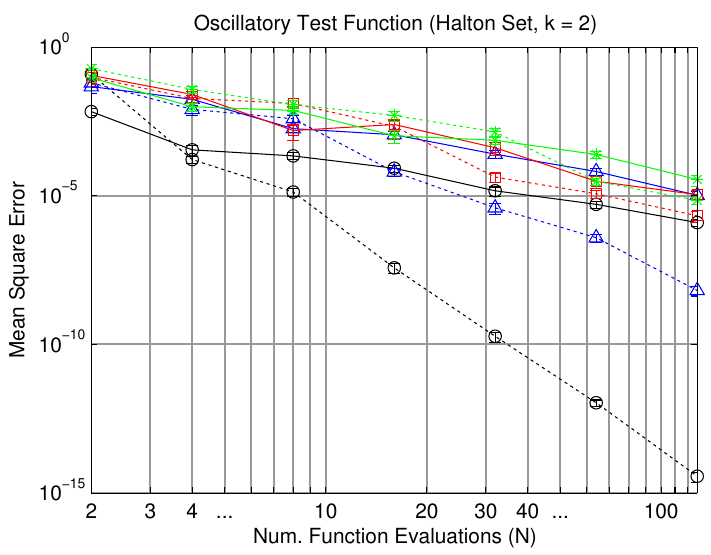}
\end{subfigure}
\begin{subfigure}[b]{0.42\textwidth}
\includegraphics[width = \textwidth]{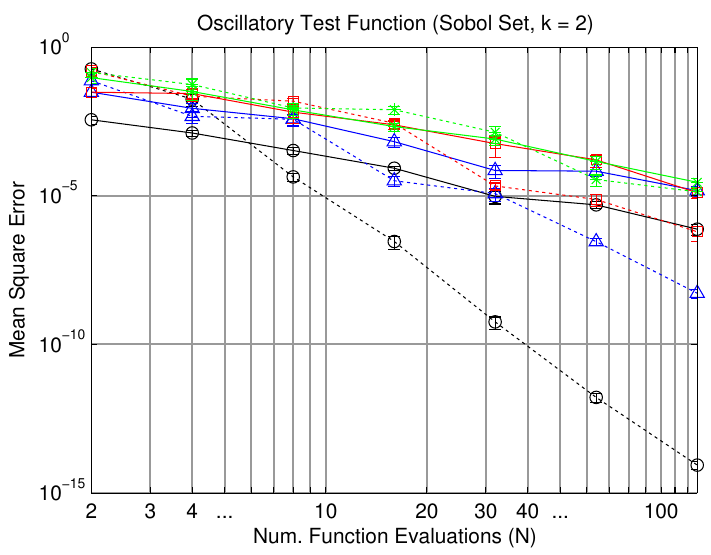}
\end{subfigure}
\caption{Numerical results: 
Each panel represents one of the 6 QMC+CF formulations. 
Solid lines correspond to standard QMC, dashed lines correspond to QMC+CF.
$\ocircle$ represents dimension $d=1$, \textcolor{blue}{$\triangle$} represents $d=2$, \textcolor{red}{$\square$} represents $d = 3$ and \textcolor{green}{$*$} represents $d = 4$.
Experiments were replicated with 10 random seeds and error bars denote standard error of the replicate mean.
QMC points were generated either from a scrambled Halton sequence or a scrambled Sobol sequence (see the Main Text).
The Wendland regression kernel took parameter $k$.
}
\end{figure}

\newpage
\subsection*{Genz Function \#2: Product Peak Test Function}

\begin{figure}[h!]
\centering
\begin{subfigure}[b]{0.42\textwidth}
\includegraphics[width = \textwidth]{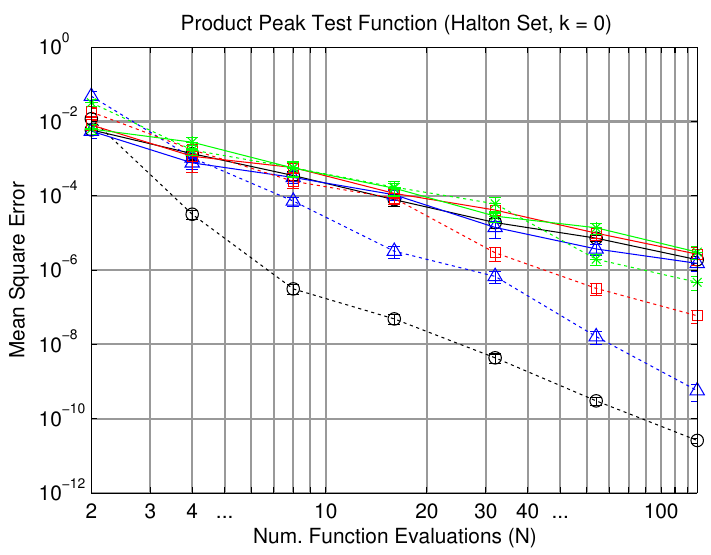}
\end{subfigure}
\begin{subfigure}[b]{0.42\textwidth}
\includegraphics[width = \textwidth]{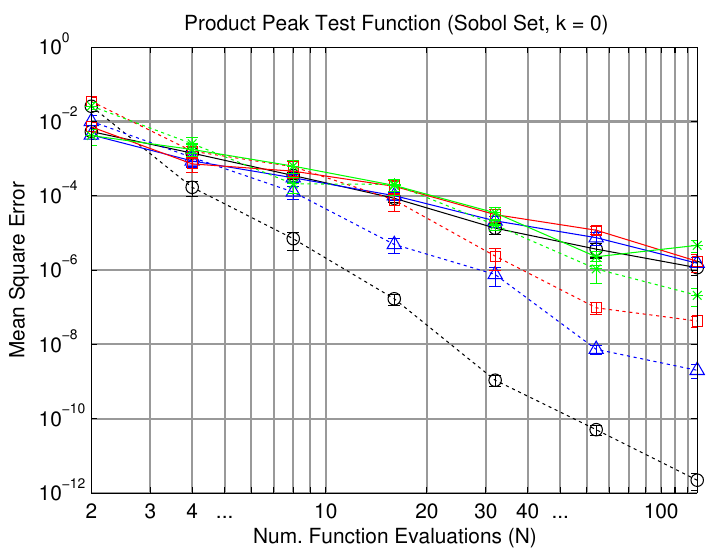}
\end{subfigure}

\begin{subfigure}[b]{0.42\textwidth}
\includegraphics[width = \textwidth]{Figures/func2_k1_h.pdf}
\end{subfigure}
\begin{subfigure}[b]{0.42\textwidth}
\includegraphics[width = \textwidth]{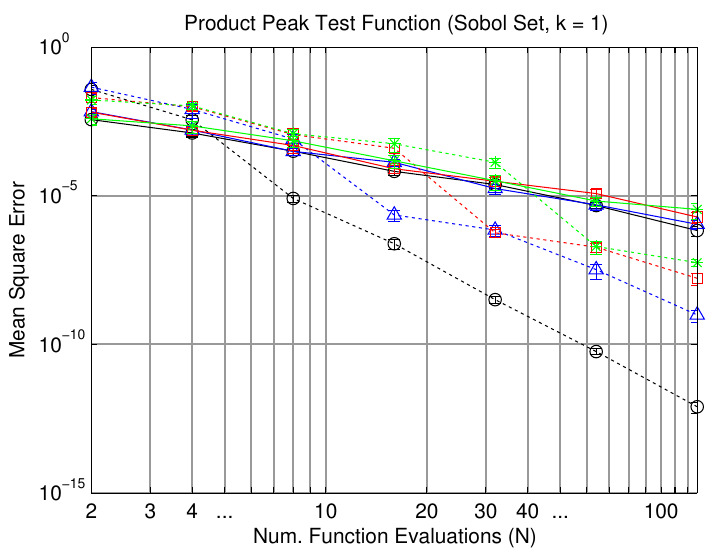}
\end{subfigure}

\begin{subfigure}[b]{0.42\textwidth}
\includegraphics[width = \textwidth]{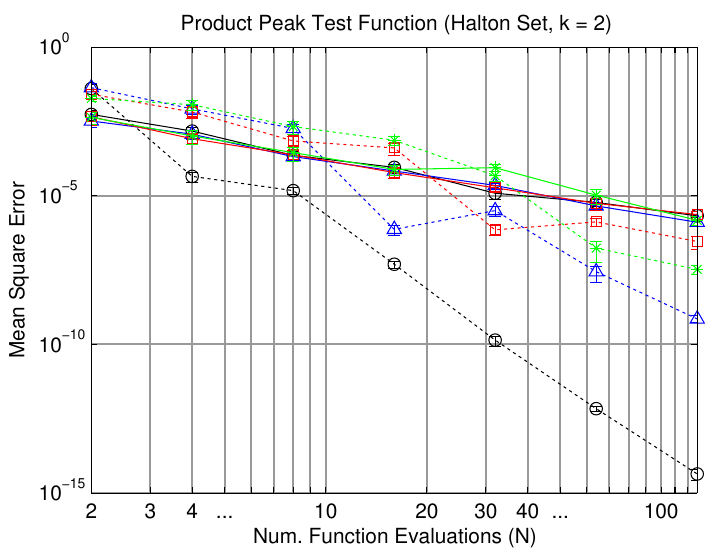}
\end{subfigure}
\begin{subfigure}[b]{0.42\textwidth}
\includegraphics[width = \textwidth]{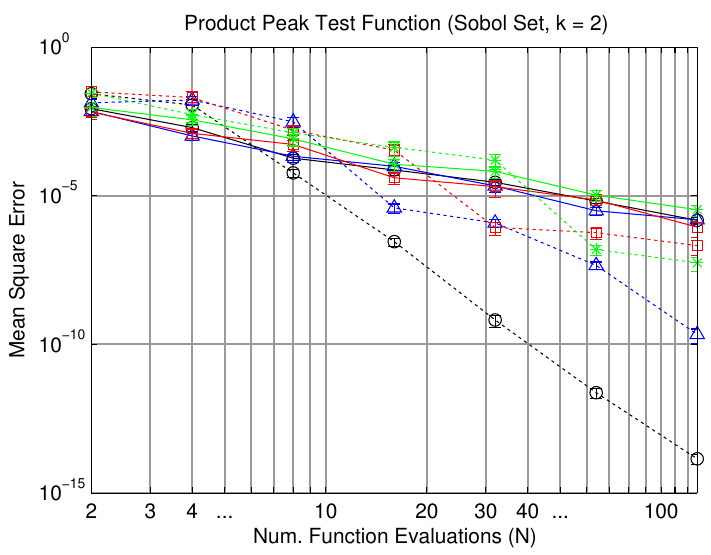}
\end{subfigure}
\caption{Numerical results: 
Each panel represents one of the 6 QMC+CF formulations. 
Solid lines correspond to standard QMC, dashed lines correspond to QMC+CF.
$\ocircle$ represents dimension $d=1$, \textcolor{blue}{$\triangle$} represents $d=2$, \textcolor{red}{$\square$} represents $d = 3$ and \textcolor{green}{$*$} represents $d = 4$.
Experiments were replicated with 10 random seeds and error bars denote standard error of the replicate mean.
QMC points were generated either from a scrambled Halton sequence or a scrambled Sobol sequence (see the Main Text).
The Wendland regression kernel took parameter $k$.}
\end{figure}

\newpage
\subsection*{Genz Function \#3: Corner Peak Test Function}

\begin{figure}[h!]
\centering
\begin{subfigure}[b]{0.42\textwidth}
\includegraphics[width = \textwidth]{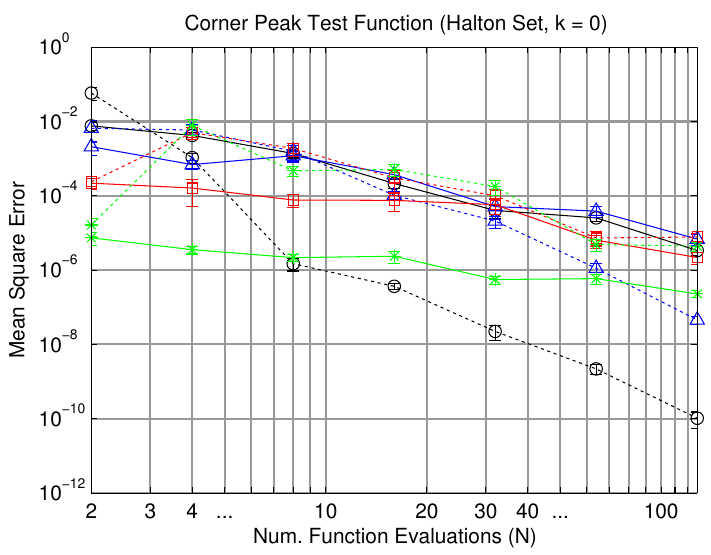}
\end{subfigure}
\begin{subfigure}[b]{0.42\textwidth}
\includegraphics[width = \textwidth]{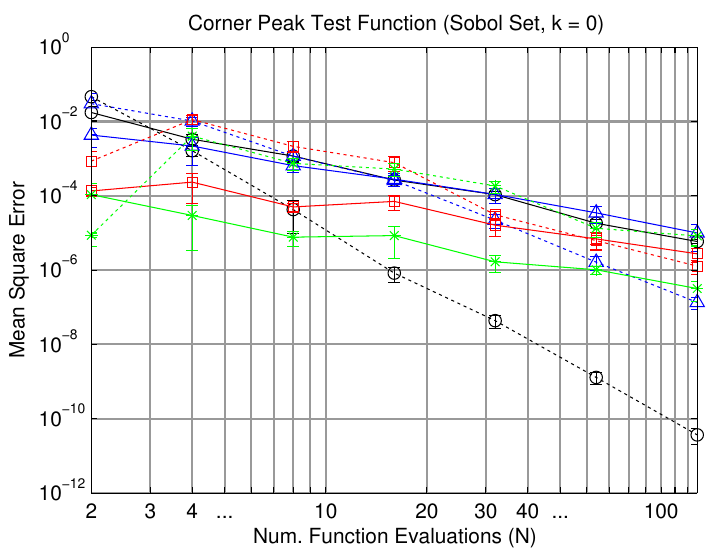}
\end{subfigure}

\begin{subfigure}[b]{0.42\textwidth}
\includegraphics[width = \textwidth]{Figures/func3_k1_h.pdf}
\end{subfigure}
\begin{subfigure}[b]{0.42\textwidth}
\includegraphics[width = \textwidth]{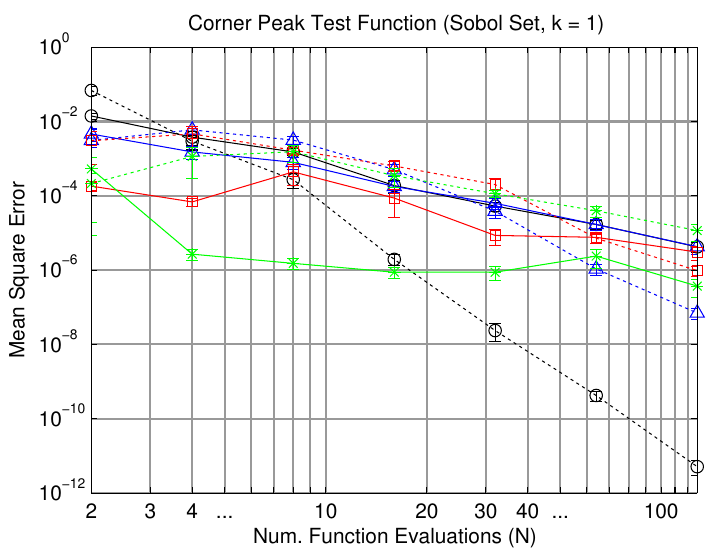}
\end{subfigure}

\begin{subfigure}[b]{0.42\textwidth}
\includegraphics[width = \textwidth]{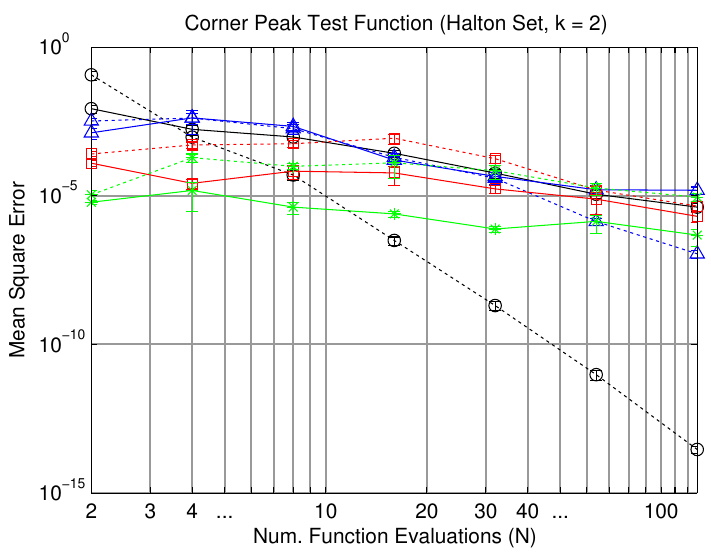}
\end{subfigure}
\begin{subfigure}[b]{0.42\textwidth}
\includegraphics[width = \textwidth]{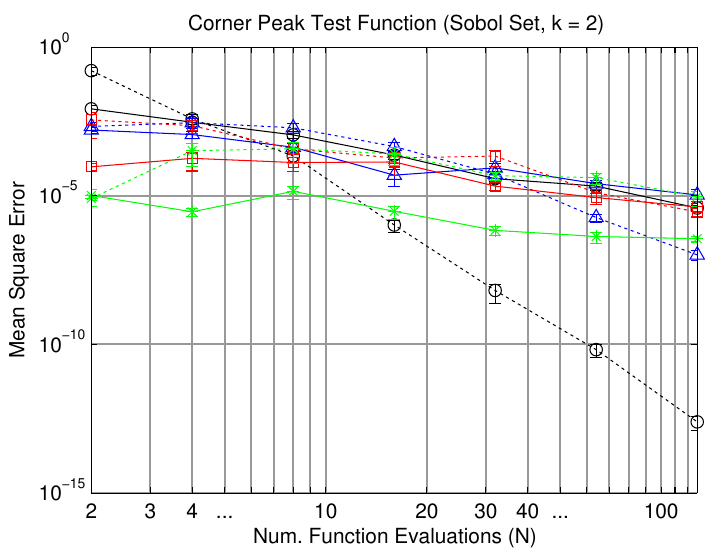}
\end{subfigure}
\caption{Numerical results: 
Each panel represents one of the 6 QMC+CF formulations. 
Solid lines correspond to standard QMC, dashed lines correspond to QMC+CF.
$\ocircle$ represents dimension $d=1$, \textcolor{blue}{$\triangle$} represents $d=2$, \textcolor{red}{$\square$} represents $d = 3$ and \textcolor{green}{$*$} represents $d = 4$.
Experiments were replicated with 10 random seeds and error bars denote standard error of the replicate mean.
QMC points were generated either from a scrambled Halton sequence or a scrambled Sobol sequence (see the Main Text).
The Wendland regression kernel took parameter $k$.}
\end{figure}

\newpage
\subsection*{Genz Function \#4: Gaussian Test Function}

\begin{figure}[h!]
\centering
\begin{subfigure}[b]{0.42\textwidth}
\includegraphics[width = \textwidth]{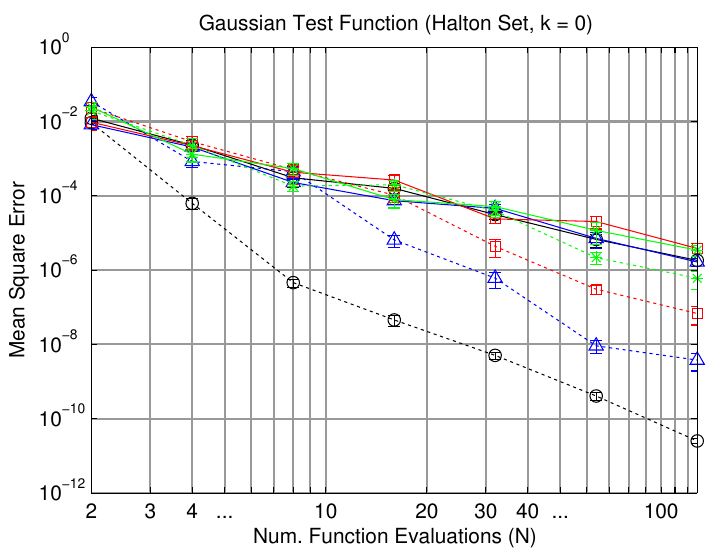}
\end{subfigure}
\begin{subfigure}[b]{0.42\textwidth}
\includegraphics[width = \textwidth]{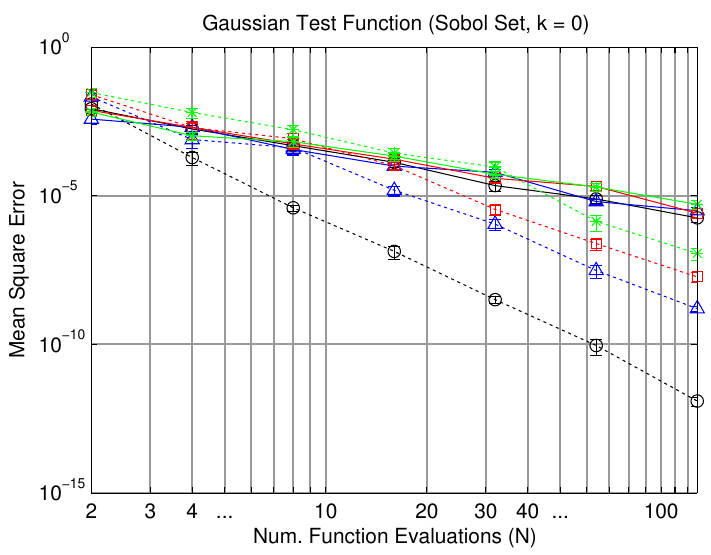}
\end{subfigure}

\begin{subfigure}[b]{0.42\textwidth}
\includegraphics[width = \textwidth]{Figures/func4_k1_h.pdf}
\end{subfigure}
\begin{subfigure}[b]{0.42\textwidth}
\includegraphics[width = \textwidth]{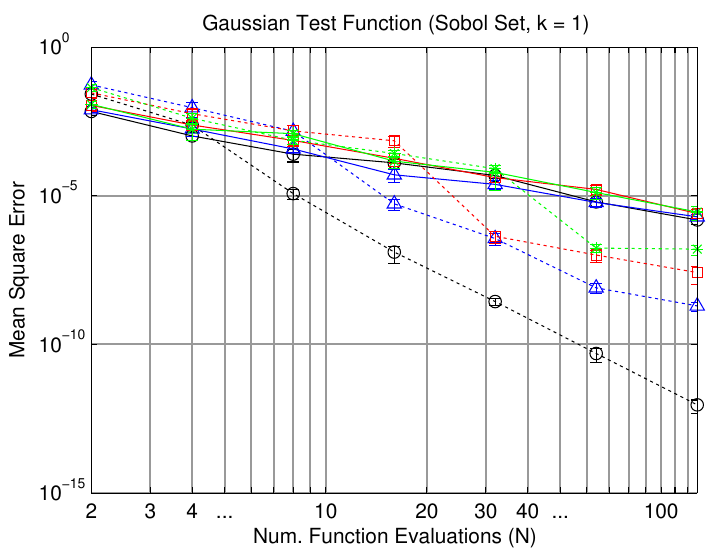}
\end{subfigure}

\begin{subfigure}[b]{0.42\textwidth}
\includegraphics[width = \textwidth]{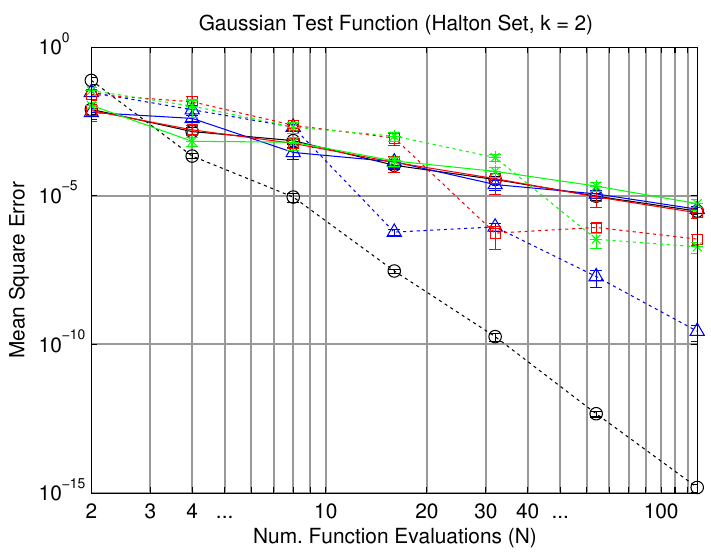}
\end{subfigure}
\begin{subfigure}[b]{0.42\textwidth}
\includegraphics[width = \textwidth]{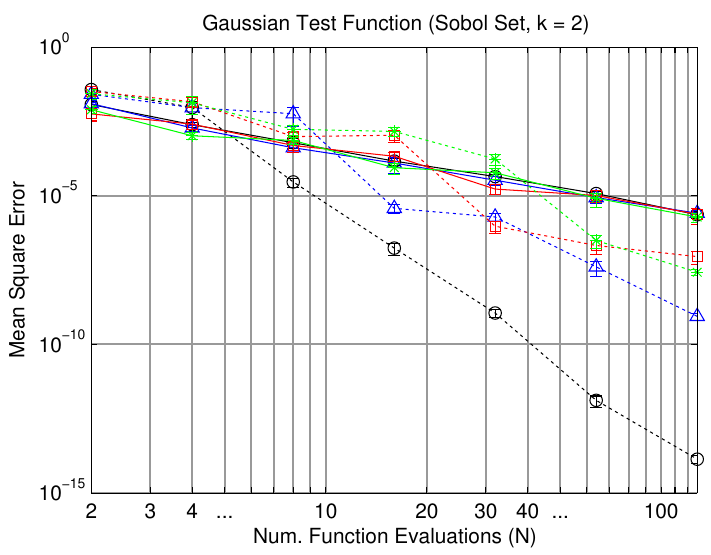}
\end{subfigure}
\caption{Numerical results: 
Each panel represents one of the 6 QMC+CF formulations. 
Solid lines correspond to standard QMC, dashed lines correspond to QMC+CF.
$\ocircle$ represents dimension $d=1$, \textcolor{blue}{$\triangle$} represents $d=2$, \textcolor{red}{$\square$} represents $d = 3$ and \textcolor{green}{$*$} represents $d = 4$.
Experiments were replicated with 10 random seeds and error bars denote standard error of the replicate mean.
QMC points were generated either from a scrambled Halton sequence or a scrambled Sobol sequence (see the Main Text).
The Wendland regression kernel took parameter $k$.}
\end{figure}

\newpage
\subsection*{Genz Function \#5: Continuous Test Function}

\begin{figure}[h!]
\centering
\begin{subfigure}[b]{0.42\textwidth}
\includegraphics[width = \textwidth]{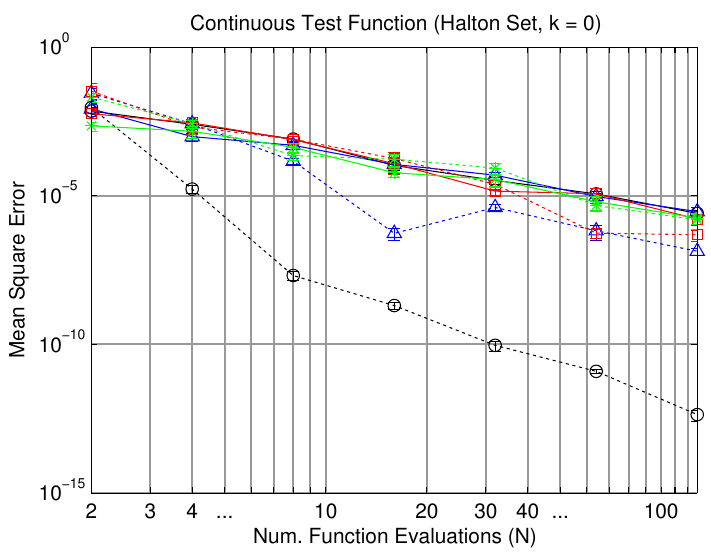}
\end{subfigure}
\begin{subfigure}[b]{0.42\textwidth}
\includegraphics[width = \textwidth]{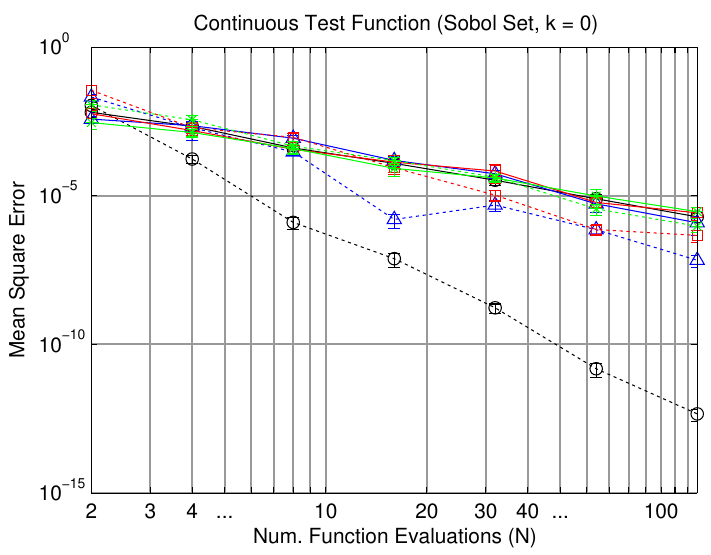}
\end{subfigure}

\begin{subfigure}[b]{0.42\textwidth}
\includegraphics[width = \textwidth]{Figures/func5_k1_h.pdf}
\end{subfigure}
\begin{subfigure}[b]{0.42\textwidth}
\includegraphics[width = \textwidth]{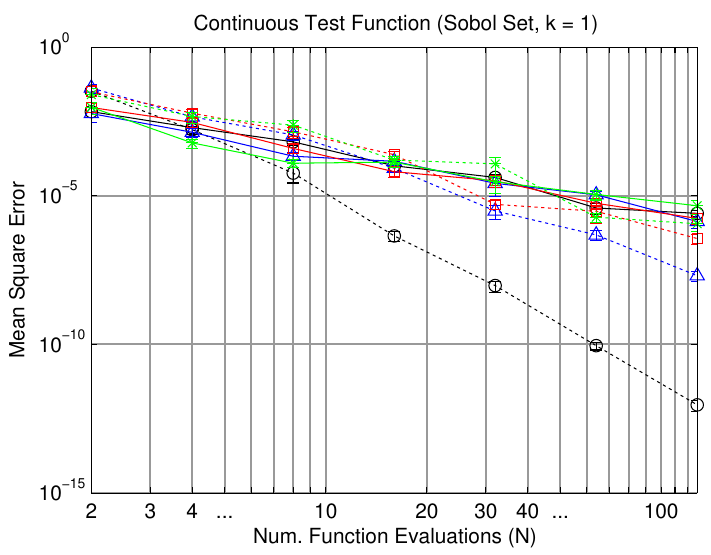}
\end{subfigure}

\begin{subfigure}[b]{0.42\textwidth}
\includegraphics[width = \textwidth]{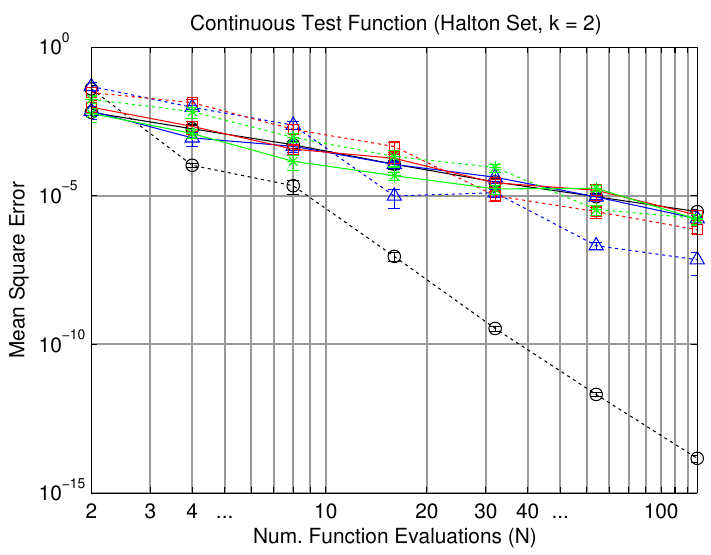}
\end{subfigure}
\begin{subfigure}[b]{0.42\textwidth}
\includegraphics[width = \textwidth]{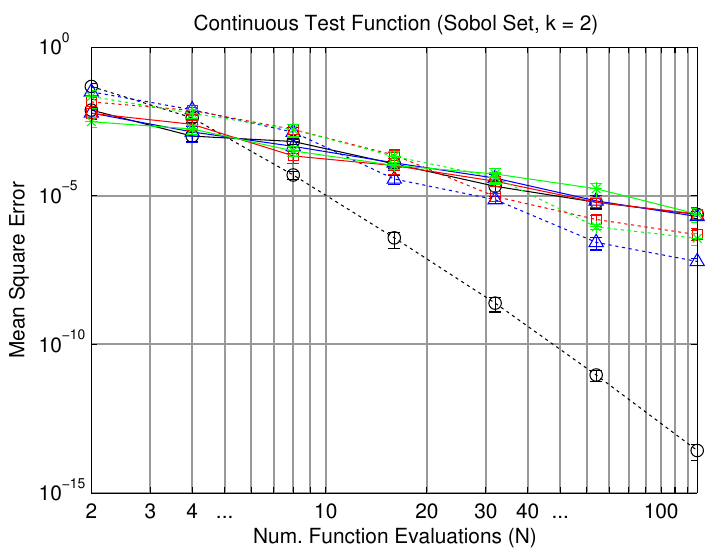}
\end{subfigure}
\caption{Numerical results: 
Each panel represents one of the 6 QMC+CF formulations. 
Solid lines correspond to standard QMC, dashed lines correspond to QMC+CF.
$\ocircle$ represents dimension $d=1$, \textcolor{blue}{$\triangle$} represents $d=2$, \textcolor{red}{$\square$} represents $d = 3$ and \textcolor{green}{$*$} represents $d = 4$.
Experiments were replicated with 10 random seeds and error bars denote standard error of the replicate mean.
QMC points were generated either from a scrambled Halton sequence or a scrambled Sobol sequence (see the Main Text).
The Wendland regression kernel took parameter $k$.}
\end{figure}

\newpage
\subsection*{Genz Function \#6: Discontinuous Test Function}

\begin{figure}[h!]
\centering
\begin{subfigure}[b]{0.42\textwidth}
\includegraphics[width = \textwidth]{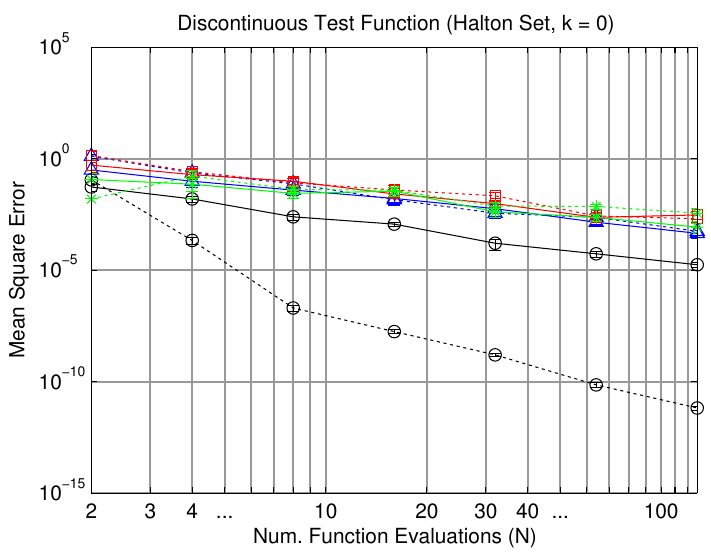}
\end{subfigure}
\begin{subfigure}[b]{0.42\textwidth}
\includegraphics[width = \textwidth]{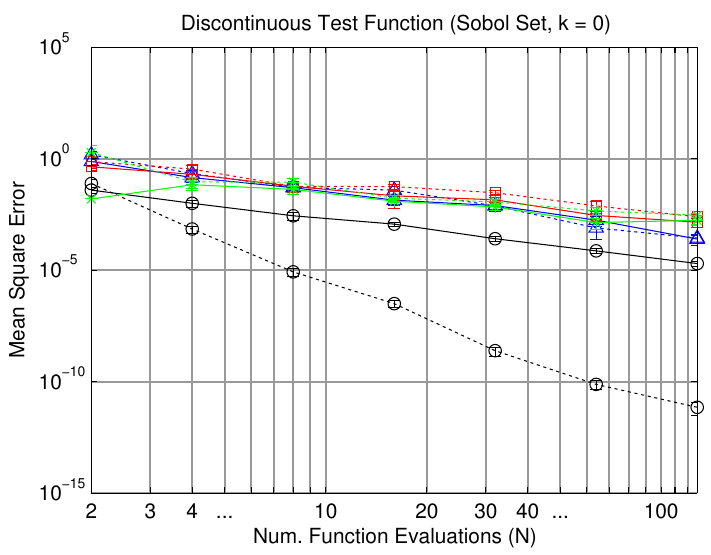}
\end{subfigure}

\begin{subfigure}[b]{0.42\textwidth}
\includegraphics[width = \textwidth]{Figures/func6_k1_h.pdf}
\end{subfigure}
\begin{subfigure}[b]{0.42\textwidth}
\includegraphics[width = \textwidth]{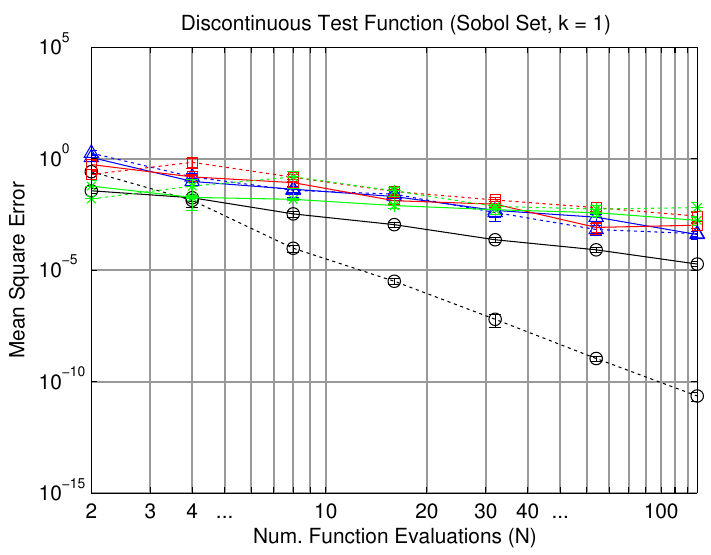}
\end{subfigure}

\begin{subfigure}[b]{0.42\textwidth}
\includegraphics[width = \textwidth]{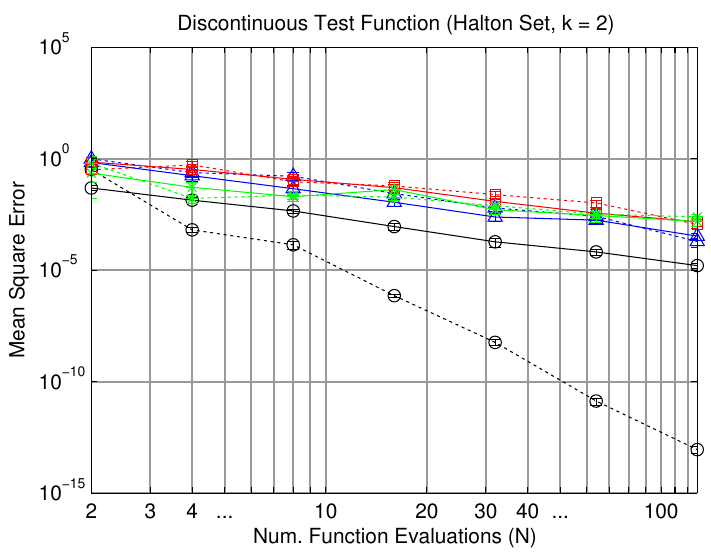}
\end{subfigure}
\begin{subfigure}[b]{0.42\textwidth}
\includegraphics[width = \textwidth]{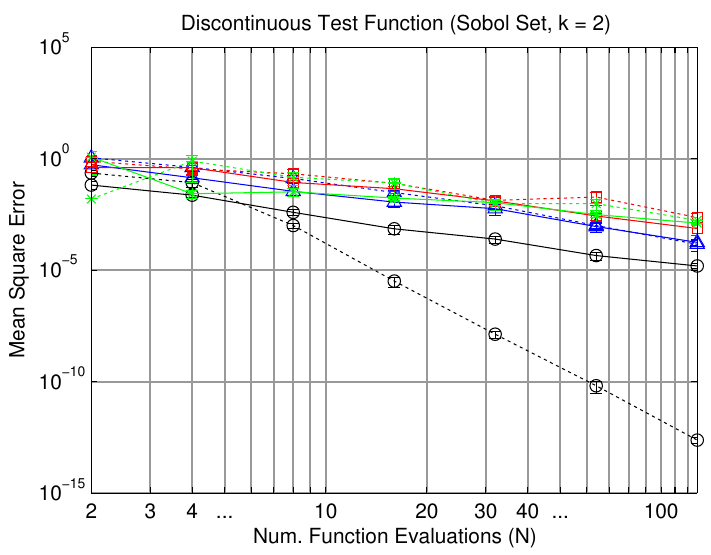}
\end{subfigure}
\caption{Numerical results: 
Each panel represents one of the 6 QMC+CF formulations. 
Solid lines correspond to standard QMC, dashed lines correspond to QMC+CF.
$\ocircle$ represents dimension $d=1$, \textcolor{blue}{$\triangle$} represents $d=2$, \textcolor{red}{$\square$} represents $d = 3$ and \textcolor{green}{$*$} represents $d = 4$.
Experiments were replicated with 10 random seeds and error bars denote standard error of the replicate mean.
QMC points were generated either from a scrambled Halton sequence or a scrambled Sobol sequence (see the Main Text).
The Wendland regression kernel took parameter $k$.}
\end{figure}

\newpage
\subsection*{Robot Arm Example: Additional Results}

We re-ran the robot arm simulation in order to compare the QMC+CF estimator with the MC+CF estimator; that is, a quasi-uniform set $\bm{u}^{1:M}$ were used to construct a control functional $f_M$, whilst a Monte Carlo sample $\bm{v}^{M+1:N}$ were used to integrate the difference $f - f_M$.

\begin{figure}[h!]
\centering
\includegraphics[width = 0.5\textwidth,clip,trim = 6.5cm 10cm 3cm 10cm]{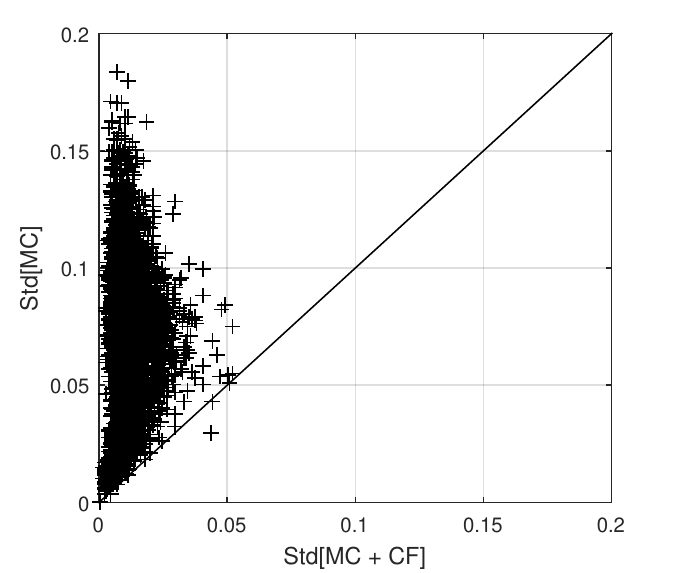}
\caption{Application to robot arm data: Examining the estimator sampling standard deviations, we see that, for all but a handful of the configurations, QMC+CF was more accurate than MC+CF.}
\end{figure}

\end{document}